\newtheorem{theorem}{Theorem}[section]
\newtheorem{fact}[theorem]{Fact}
\newtheorem{proposition}[theorem]{Proposition}
\newtheorem{remark}[theorem]{Remark}
\newtheorem{prop}[theorem]{Proposition}
\newtheorem{claim}[theorem]{Claim}
\newtheorem{convention}[theorem]{Convention}
\newtheorem{cor}[theorem]{Corollary}
\theoremstyle{definition}
\newtheorem{definition}[theorem]{Definition}
\newtheorem{conjecture}[theorem]{Conjecture}
\newcommand{\NN}{{\mathbb{N}}}
\newcommand{\sub}{\subseteq}
\newcommand{\sN}[1]{_{#1\in \NN}}
\newcommand{\uhr}[1]{\! \upharpoonright_{#1}}
\newcommand{\ML}{Martin-L{\"o}f}
\newcommand{\SI}[1]{\Sigma^0_{#1}}
\newcommand{\bi}{\begin{itemize}}
\newcommand{\ei}{\end{itemize}}
\newcommand{\bc}{\begin{center}}
\newcommand{\ec}{\end{center}}
\newcommand{\tp}[1]{2^{#1}}
\newcommand{\ex}{\exists}
\newcommand{\fa}{\forall}
\newcommand{\la}{\langle}
\newcommand{\ra}{\rangle}
\newcommand{\seqcantor}{2^{ \NN}}
\newcommand{\cantor}{\seqcantor}
\newcommand{\Opcl}[1]{[#1]^\prec}
\newcommand{\n}{\noindent}
\newcommand{\leb}{\mathbf{\lambda}}
\newcommand{\sss}{\sigma}
\newcommand{\lland}{\, \land \, }
\newcommand \seq[1]{{\left\langle{#1}\right\rangle}}
\newcommand\+[1]{\mathcal{#1}}
\newcommand{\wt}{\widetilde}
\newcommand{\ul}{\underline}
\newcommand{\LR}{\Leftrightarrow}
\newcommand{\sssl}{\ensuremath{|\sigma|}}
\newcommand{\range}{\ensuremath{\mathtt{rg}}}
 \newcommand{\Proj}[2]{{\mbox{\rm \textsf{Proj}}} ( {#1}; {#2})}
 \newcommand{\tr}{\mbox{\rm \textsf{Tr}}}
\newcommand{\Malg}[1]{M_{\tp{#1}}}
\newcommand{\ketbra}[2]{\ensuremath{| #1  \rangle \langle #2 |}}
\newcommand{\ketbradouble}[1]{\ensuremath{| #1  \rangle \langle #1 |}}
\newcommand{\andre}[1]{}
\def\C{{\mathbb C}}     
\def\norm #1{\Vert #1\Vert}
\def\ket #1{\vert #1\rangle}
\def\ketbra #1#2{\vert #1\rangle \langle #2\vert}
\newcommand\Scp[2]{\ensuremath{\, \langle #1 \,\vert #2 \,\rangle}}
\def\S{{\mathcal S}}
\newcommand*{\eps}{\varepsilon}
\newcommand*{\half}{\frac{1}{2}}
\newcommand{\Calg}{\C_{\mathtt{alg}}}
\begin{document}

  \title{Martin-L\"of random quantum states}
  \author{Andr\'e Nies}
    \address{Department of Computer Science, The University of Auckland}
  \email{andre@cs.auckland.ac.nz}
  \author{Volkher B.~Scholz}
  \address{ETH and Belimo, Z\"urich, Switzerland}
  \email{scholz@phys.ethz.ch}
\date{\today}

\subjclass{03D32,68Q30}
\maketitle

\begin{abstract}
    We extend the  key notion  of Martin-L\"of randomness for infinite bit sequences   to the quantum setting, where the   sequences become states of an infinite dimensional system.   We prove that our definition naturally extends the classical case. In analogy with the Levin-Schnorr theorem, we  work towards characterising quantum ML-randomness of  states by incompressibility (in the sense of quantum Turing machines)  of all initial segments.   \end{abstract}

\section{Introduction}
\n \emph{Algorithmic theory of randomness in the classical setting.}  An   infinite sequence of classical bits can be thought of as  random if it satisfies no exceptional properties. Examples of  exceptional property are that every other bit is 0, and  that all initial segments  have more   0s than 1s.  An infinite sequences of fair coin tosses has  neither of the two properties.  

 Infinite bit  sequences form the so-called  Cantor space $\cantor$, which is equipped with a natural compact topology,  and the uniform  measure which makes the infinitely many coin tosses independent and fair.  Recall    that a subset of $\cantor$  is defined to be \emph{null} if it is contained in $\bigcap_m G_m$ for a sequence of open sets $G_m$ with measure tending to $0$. An exceptional  property then corresponds  to a null set in   $\cantor$.

Since no sequence can actually avoid all the null sets, one  has to  restrict the   class of null sets that can be  considered. One only allows  null sets that are effective, i.e.\ can be described in an algorithmic way. The possible  levels of effectiveness one can choose    determine  a hierarchy of formal randomness notions.
Such notions are studied  for instance   in the books \cite{Downey.Hirschfeldt:book, Nies:book}. In recent work, the algorithmic  theory of randomness has been  connected to     mathematical fields  such as ergodic theory and set theory~\cite{Nandakumar:08,Hoyrup:12,Monin.Nies:15,Monin.Nies:17}.

 Martin-L\"of (ML) randomness, introduced in~\cite{Martin-Lof:66}, is a central algorithmic   randomness notion.  Roughly speaking, a bit sequence $Z$ is ML-random if it is in no null set $\bigcap_{m\in \NN}  G_m$ where the $G_m$ are effectively open uniformly in $m$, and the uniform measure of $G_m$ is at most $\tp {-m}$. This  notion is central because there is a universal test, and because ML-randomness  of an infinite  bit sequence can be naturally characterised  by an incompressibility condition on the initial segments of the sequence  (Levin-Schnorr theorem). Detail will be given below.

\n \emph{The quantum setting.} Our main goal is to develop an algorithmic  theory of randomness for infinite sequences of quantum bits. This poses two challenges.

The first challenge  is to provide a  satisfying mathematical model for such sequences. This is not as straightforward as in the classical case: deleting one qubit from    a system of finitely many entangled qubits (e.g.\ the EPR state, which describes two entangled photons) creates a mixed state, namely a statistical superposition  of possibilities for the remaining qubits.  So one  actually  studies  states of  a system that can be interpreted as statistical superpositions of   infinite sequences of   quantum bits (qubits).    Such states have  been considered    in theoretical physics in the form of  half-infinite spin chains (e.g.\ a linear arrangement  of  hydrogen atoms with the electron in the basic or the excited state).

 The  usual mathematical approach (e.g.\ \cite{Bratteli.Robinson:12,Bjelakovic.etal:04,Benatti.etal:14}) to deal with such statistical superpositions is as follows.  The sequences of qubits are  modeled by coherent sequences $\seq{\rho_n }\sN n$ of density matrices. We have  $\rho_n \in M_n$ where   $M_n$ is  the algebra of $ \tp n \times \tp n$ matrices over $\mathbb C$. The idea is that  $\rho_n$    describes the first $n$ qubits. Infinite qubit sequences  are  states (that is, positive functionals of norm~$1$) of a certain  C$^*$-algebra  $\Malg \infty$, the direct  limit of the  matrix algebras $M_n$. 

The second  challenge is the absence of measure  in the quantum setting. We will use instead  the unique   tracial state $\tau$ on $M_\infty$ as  a noncommutative analog of  the uniform  measure. For a projection $p \in M_n$   one has $\tau(p)= \tp{-n} \dim \range (p)$. The analog of an effectively open set in Cantor space is now  a computable  increasing sequence $G = \seq {p_n} \sN  n$ of projections, $p_n \in M_n$, and one defines $\tau(G) = \sup_n \tau(p_n)$. Based on this we will introduce our main technical concept, a quantum version of ML-tests.

 \n \emph{Overview of the paper.}   Section~\ref{s:prelim}   provides  the necessary preliminaries on \emph{finite} sequences of qubits,  as well as  on density matrices, which describe statistical superpositions of qubit  sequences of the same length.   We also    review the mathematical model for states that embody  infinite sequences of qubits.

 In Section~\ref{s:qMLrd}  we introduce    quantum \ML\ tests. We show that there is  a universal such  test. Every infinite   sequence of classical bits can be seen as a state of the C$^*$-algebra~$\Malg \infty$. We show that for such a  sequence, quantum ML-randomness coincides with the usual ML-randomness.  So our notion naturally extends the classical one.

The Levin-Schnorr  theorem (Levin~\cite{Levin:74}, Schnorr~\cite{Schnorr:71})   characterises ML-randomness of a bit sequence $Z$ by the growth rate of the initial segment  complexity $K(Z \uhr n)$ (here $Z\uhr n $ denotes  the string consisting of the first $n$ bits, and $K$ denotes a version of Kolmogorov complexity where the set of   descriptions that   a universal machine can use as   inputs has to be prefix free).  In Section~\ref{s:char} we work towards  a potential quantum  version of  this important result. This would mean   that quantum \ML\ random states are characterized by  having  initial segments of  a fast growing  quantum Kolmogorov complexity. The actual formulation of our result corresponds to the G\'acs-Miller-Yu theorem~\cite{Gacs:80,Miller.Yu:08} which uses plain Kolmogorov complexity~$C$,  rather than the original  Levin-Schnorr theorem,  for reasons related to different properties of classical and quantum Turing machines~\cite{Berthiaume.ea:00}.

We note that there has been an earlier application of notions from computability theory  to spin chains.  Wolf, Cubitt and Perez-Garcia~\cite{Cubitt.etal:15} studied  undecidability in the quantum setting. They constructed Hamiltonians on  square lattices with associated ground states which radically change behaviour as the system size grows. For example, while being a product state for small system sizes, it becomes entangled for large sizes~\cite{Bausch.etal:18}. Furthermore,  based only on initial segments of the sequence, it is computationally impossible to predict whether this effects happens. The states we consider here are defined on a spin chain rather than on a two-dimensional lattice, and are also not constrained to be ground states of local Hamiltonians. However, they share similar features in the sense of possessing unpredictable behaviour as the system size grows.



 \section{Preliminaries} \label{s:prelim}
 
\subsection{Quantum bits.}   A  classical bit can be in states $0,1$.      A qubit   is a  physical system with two possible classical states: for instance, the          polarisation of photon horizontal/vertical,   an  hydrogen atom in the ground or the first   excited state.      
     A qubit can be  in a superposition of the two classical states: 
  $\alpha \mid 0\ra + \beta \mid 1 \ra,$ 
  where   $\alpha, \beta \in \mathbb C $,  $|\alpha|^2 + |\beta|^2 =1$. 
        A measurement of a qubit  w.r.t.\ the standard basis $|0\ra, |1 \ra$ yields $0$ with probability $|\alpha|^2$,
   and $1$ with probability  $|\beta|^2$. 

    \smallskip
    
\subsection{Finite sequences of quantum bits.}
  The state of a physical system is represented by a vector in  a  finite dimensional Hilbert space~$A$.   For vectors $a,b \in A$, 
  $\la a | b \ra$ denotes the inner product of vectors $a,b$, which is  linear in the \emph{second} component and antilinear in the first.    
 For systems represented by Hilbert spaces  $A,B$, the tensor product $A \otimes B$ is a Hilbert space that represents the combined system. 
 One  
 defines an inner product on $A \otimes B$ by 
  $\la a\otimes b  | c \otimes d \ra = \la a | c \ra \la b | d \ra$.

Mathematically,  a qubit is simply  a unit vector in $\mathbb C^2$.  
The state of a  system of   $n$ qubits is   a  unit vector in the tensor power 
\bc $\+ H_n : = (\mathbb C^2)^{\otimes n} =  \underbrace   {\mathbb C^2 \otimes  \ldots \otimes \mathbb C^2}_{n}$. \ec    
    We   denote the standard basis of $\mathbb C^2$ by $| 0 \ra, | 1 \ra$.
The standard basis of $\+ H_n $   consists of vectors 
\bc $  | a_0 \ldots a_{n-1} \ra : = | a_0 \ra \otimes \ldots \otimes | a_{n-1} \ra$, \ec    
 where $\sigma = a_0\ldots a_{n-1}$ is an $n$-bit string. 
 While the usual notation for $| a_0 \ldots a_{n-1} \ra$  in quantum physics is   $| \sss \ra$, for brevity   we  will often write $\ul \sss$. 
     The state of the system of $n$ qubits is a unit vector  $\+ H_n $   and hence a  certain linear superposition of these basis vectors.
For example,  for $n=2$, the   EPR (or ``maximally entangled")  state is $\frac 1 {\sqrt 2} ( |00\ra  + |11 \ra)$.
\smallskip
    
\subsection{Mixed states and density operators.} 
With each 
 state $|\psi \ra$ (a  unit vector in $\+ H_n$) we associate, and often identify,  the linear form     $|\psi \ra  \la \psi |$ on $\+ H_n$ 
 given by $\ket{\phi}    \mapsto  |\psi \ra \la \psi \mid \phi \ra$. A \emph{mixed state} is a convex linear combination $\sum_{i=1}^{2^n} p_i  |\psi_i \ra  \la \psi_i |$ of    pairwise orthogonal   states $\psi_i$. 
 In this context a  state  $|\psi \ra$  (identified with $\ketbradouble \psi$) is called \emph{pure}.

    Recall that for an operator $S$ on a finite dimensional Hilbert space $A$, the  \emph{trace}  $\tr (S)$ is the  sum of the eigenvalues of $ S$ (counted with multiplicity).  A Hermitian operator is called positive if  all eigenvalues are non-negative.
   A mixed state corresponds to  a positive  operator  $S$  on $\+ H_n$  with $\tr (S) =1$, as one can see  via  the spectral decomposition of $S$.

  A $C^*$-algebra is a subalgebra of the bounded operators on some Hilbert space closed under taking the adjoint, and topologically closed in the operator norm.
We let 
\bc $M_k= \mathrm{Mat}_{2^k}(\C)$ \ec  denote the $C^*$-algebra of $\tp k\times \tp k$
matrices over $\C$ (identified with operators on $\C^{\tp k}$). 
A~\emph{density operator} (or density matrix)  in $ \Malg k$  is a positive  operator in $M_k$ with trace 1. The states $\rho \in \S(M_{k})$ can be  identified with the density operators $S$ on $\+ H_k$: to  a state $\rho$ corresponds     the unique density operator $S$ such that 
  \begin{equation}\label{eqn:DO}
  \rho(X) = \tr (SX) \text{  for each }X\in M_{k}.
  \end{equation}

\smallskip
    
\subsection{Embeddings between matrix algebras.}   We view $\+ H_{n+1} $ as the tensor product $ \+ H_n \otimes \C^2$. Then $\Malg {n+1} $ is naturally isomorphic to $ \Malg n \otimes \Malg 1$.   We view the indices of matrix entries  as numbers written in ``reverse binary'', i.e., with the most significant digit written on the right.  Thus a matrix entry is indexed by a pair  of strings  $\sss, \tau$ of the same length, and  a matrix in $\Malg {n+1}$ has the form
\[ A =  \left ( \begin{matrix} A_{00}  &  A_{01}  \\ A_{10}  &  A_{11}  \end{matrix} \right ) \]
where  each $A_{i,k} = (a_{\sss i , \tau k})_{|\sss| = |\tau| =n}$ is in $\Malg n$.
 We have embeddings $\Malg n \to \Malg {n+1}$ given by \begin{equation} \label{eqn:embed} A \mapsto A \otimes I_2 = \left ( \begin{matrix} A  &  0 \\ 0 &  A \end{matrix} \right ). \end{equation} 
Note that the embeddings preserve the operator norm.

\smallskip

\subsection{Partial trace operation.} \label{ss:pT} For each $n$, there  is a  unique linear map $T_n \colon \Malg {n+1}   \to \Malg n$, called  the  partial trace operation, such that $T_n( R \otimes S) = R \cdot  \tr(S)$ for each $R \in \Malg n$ and $S \in \Malg 1$. 
Intuitively, this operation   corresponds to deleting the last qubit; for instance,  $T_1 ( \ketbra  {10} {10}) = \ketbra 1 1 $.    
   \begin{remark} \label{rem:EPR} {\rm Consider again the EPR  state $\frac 1 {\sqrt 2}( |00\ra  + |11 \ra)$, now viewed as the   operator  $\beta = \frac 1 { 2}( |00\ra  + |11 \ra )  ( \la 00 | + \la 11 |)$ in $M_2 $.  While this state is pure,   $T_1(\beta) = \frac 1 2 (\ketbra {0} 0  +   \ketbra 1 1 )$   is a properly mixed state.  }  \end{remark}

 One can provide an explicit description of the partial trace operation $T_n$  as follows. $\+ H_{n+1}$ has as a base the $ |{\sss r}\ra $, $\sss$ a string of $n$ bits, $r$ a bit. For    a $\tp{n+1} \times \tp{n+1}$ matrix $A= ( a_{\sss  r, \tau s})_{\sssl, |\tau|=n, r,s \in \{0,1\}}$,  $T_n(A)$ is   given by the $\tp n \times \tp n$  matrix  \begin{equation} \label{small matrix} b_{\sss, \tau} = a_{\sss 0, \tau 0} + a_{\sss 1 , \tau 1}. \end{equation}  
It is easy to check that if $A$ is a density matrix, then so is $T_n(A)$. 

%
%

\subsection{Direct limit of matrix algebras, and tracial states.} \label{ss:TS} The so-called CAR algebra $\Malg {\infty}$  is the direct limit of the $\Malg k $ under the norm-preserving embeddings in~(\ref{eqn:embed}).  Thus, $\Malg \infty$ is the norm completion of the   union of the $\Malg n$, seen as a $*$-algebra. Clearly $\Malg \infty$ is a $C^*$-algebra. Note that it is more  common to write $\Malg {\tp\infty}$ for this algebra, and to denote our $M_k$'s by $M_{\tp k}$, but we use the present notations for simplicity.

 A \emph{state} on a   $C^*$-algebra $M$ is a positive linear functional $\rho \colon M \to \C$ that sends the unit element of $M$ to  $1$ (this implies that $|| \rho || = 1$). To be positive means that $x \ge 0 \to \rho(x) \ge 0$.

A state $\rho$ is called \emph{tracial} if $\rho(ab) = \rho (ba)$ for each pair of operators $a,b$. On $\Malg n$ there is a unique tracial  state $\tau_n$ given by $\tau_n (a) = \tp{-n} \tr(a)= \tp{-n} \sum_{\sssl =n} a_{\sss, \sss}$. The corresponding density matrix is $\tp{-n} I_{\tp n}$ (i.e., it has $\tp{-n}$ on the diagonal and $0$ elsewhere). Note that the states $\tau_n$  
are compatible with the embeddings $M_n \to M_{n+1}$. This yields a tracial state    $\tau$ on $\Malg {\infty}$,  which is unique as well.   


\subsection{Quantum   Cantor space.} \label{s:qCantor}
  \label{ex:measure} \label{ex:diracmeasure}
    The quantum analog of Cantor space is $\+ S(\Malg \infty)$,  the set of states  of the $C^*$-algebra $\Malg \infty$.  The space  $\+ S(\Malg \infty)$ is endowed with  a convex structure, and   is compact in the weak $*$ topology (the coarsest topology that makes the application maps $\rho\mapsto \rho(x)$ continuous) by the Banach-Alaoglu theorem. The following is well known but hard to reference in this form. 
\begin{fact}  A state  $\rho \in  \+ S(\Malg \infty)$ corresponds to a   sequence $\seq {\rho_n} \sN n$, where  $\rho_n$ is a  density matrix in $\Malg n$, and   which is coherent in the sense that taking the  partial trace of $\rho_{n+1}$ yields $\rho_n$.   \end{fact}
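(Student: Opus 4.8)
The plan is to exhibit the two maps between states of $\Malg\infty$ and coherent sequences of density matrices and check that they are mutually inverse; the only genuine computation is an adjunction identity relating the partial trace $T_n$ to the embedding $\Malg n \hookrightarrow \Malg{n+1}$ of~(\ref{eqn:embed}).

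\emph{From a state to a coherent sequence.} Given $\rho \in \+ S(\Malg\infty)$, restrict it along the unital $*$-embedding of $\Malg n$ into $\Malg\infty$; since the embeddings in~(\ref{eqn:embed}) carry $I_{\tp n}$ to $I_{\tp{n+1}}$, the unit of $\Malg\infty$ is the common image of the $I_{\tp n}$, so $\rho\uh{\Malg n}$ is again a state on $\Malg n$. By the identification~(\ref{eqn:DO}) there is then a unique density matrix $\rho_n \in \Malg n$ with $\rho(X) = \tr(\rho_n X)$ for all $X \in \Malg n$ (positivity is part of~(\ref{eqn:DO}); trace $1$ follows from $\tr(\rho_n) = \rho(I_{\tp n}) = 1$).

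\emph{Coherence.} The key step is the identity
\[ \tr\big(T_n(A)\, X\big) = \tr\big(A\,(X \otimes I_2)\big) \qquad \text{for } A \in \Malg{n+1},\ X \in \Malg n, \]
which I would verify by substituting the explicit formula~(\ref{small matrix}) for $T_n$ and the block form~(\ref{eqn:embed}) of $X \otimes I_2$ and matching the two sums over the reverse-binary index set: both sides come out to $\sum_{|\sss| = |\tau| = n}\big(a_{\sss 0, \tau 0} + a_{\sss 1, \tau 1}\big)\, X_{\tau \sss}$. Applying it with $A = \rho_{n+1}$, and using that under~(\ref{eqn:embed}) the matrix $X$ and the matrix $X \otimes I_2$ denote the same element of $\Malg\infty$, we get for every $X \in \Malg n$
\[ \tr\big(T_n(\rho_{n+1})\, X\big) = \tr\big(\rho_{n+1}(X \otimes I_2)\big) = \rho(X \otimes I_2) = \rho(X) = \tr(\rho_n X). \]
Since the bilinear form $(S, X) \mapsto \tr(SX)$ is non-degenerate on $\Malg n$, this forces $T_n(\rho_{n+1}) = \rho_n$, which is exactly coherence.

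\emph{From a coherent sequence to a state, and bijectivity.} Conversely, given a coherent $\seq{\rho_n}\sN n$, define $\rho$ on the dense $*$-subalgebra $\bigcup_n \Malg n \sub \Malg\infty$ by $\rho(X) = \tr(\rho_n X)$ whenever $X \in \Malg n$; this is well defined because, viewing $X \in \Malg n$ as $X \otimes I_2 \in \Malg{n+1}$, the identity above together with $T_n(\rho_{n+1}) = \rho_n$ gives $\tr(\rho_{n+1}(X \otimes I_2)) = \tr(\rho_n X)$. The functional $\rho$ is linear and unital ($\tr(\rho_n I_{\tp n}) = 1$) and positive ($\tr(\rho_n X) = \tr(\rho_n^{1/2} X \rho_n^{1/2}) \ge 0$ for $X \ge 0$); being a state on each $\Malg n$ it has norm $1$ on the dense subalgebra, so it extends uniquely to a bounded linear functional on $\Malg\infty$. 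The extension is still unital, and it is positive since any $x \ge 0$ in $\Malg\infty$ can be written $x = y^*y$ with $y = \lim_k y_k$, $y_k \in \bigcup_n \Malg n$, whence $\rho(x) = \lim_k \rho(y_k^* y_k) \ge 0$; thus it lies in $\+ S(\Malg\infty)$. Finally a state on $\Malg\infty$ is determined by its restriction to the dense subalgebra $\bigcup_n \Malg n$, i.e.\ by the sequence $\seq{\rho_n}$, so the two assignments are mutually inverse. The main obstacle is not algebraic but analytic: (a) getting the reverse-binary index bookkeeping right in the adjunction identity, and (b) the passage to the completion --- checking that the functional assembled from $\seq{\rho_n}$ is not merely bounded but \emph{positive} on all of $\Malg\infty$, which the $x = y^*y$ approximation settles. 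One could instead quote the universal property of the direct limit for the compatible family of unital positive maps $\rho\uh{\Malg n}$, but the direct argument is self-contained.
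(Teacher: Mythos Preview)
Your proof is correct and follows essentially the same route as the paper: both directions are handled identically, and your adjunction identity $\tr(T_n(A)X)=\tr(A(X\otimes I_2))$ is exactly the computation the paper carries out (via~(\ref{small matrix})) to establish Claim~\ref{fact:coherence}. You are more explicit than the paper about positivity of the extension to $\Malg\infty$ (the paper just says ``extend $\widetilde\rho$ to a state $\rho$ using that $\widetilde\rho$ is continuous''), but this is a matter of detail rather than a different approach.
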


 \begin{proof}  First let  $\rho \in  \S(\Malg {\infty})$.  Let $\rho_n $ be the state which is the  restriction  of $\rho$ to $\Malg n $ (later on  we will use the notation $\rho \uhr n$).  Let   $S_n$ be the  density matrix on $\Malg n$ corresponding to $\rho_n$  according to~(\ref{eqn:DO}).        
\begin{claim}  \label{fact:coherence}   $T_n(\rho_{n+1}) = \rho_n$. \end{claim}
To see this note that a brief calculation using   (\ref{small matrix}) shows that  for each $A \in \Malg n$   \[ \tr  \left( S_{n+1}  \left ( \begin{matrix} A  &  0 \\ 0 &  A \end{matrix} \right ) \right ) = \tr \  ( T_n(S_{n+1}) A). \]
We also have  $\rho_{n+1} (A \otimes I_2)  = \rho_n(A)$ by definition of the $\rho_n$. Therefore  $T_n(S_{n+1}) = S_n$ by the uniqueness of the density matrix for $\rho_n$.

Conversely, given a sequence $(\rho_n)$ of states on $\Malg n$  such that $T_n(\rho_{n+1}) = \rho_n$, there is a unique $\rho  \in  \S(\Malg {\infty})$ such that  $\rho_n $ is  the restriction of $\rho$ to $\Malg n $ for each~$n$.  To see this, first one defines a bounded functional $  \wt \rho$ on the $*$-algebra $\bigcup_n \Malg n$ that extends each $\rho_n$. Then one extends $\widetilde \rho$   to  a state $\rho$ on  $\Malg {\infty}$ using that $\widetilde  \rho$ is continuous.    \end{proof}
  
 We  need to allow mixed states as sequence entries when we develop an    analog   of infinite bit sequences for qubits: as discussed in Remark~\ref{rem:EPR},  the EPR state $\beta \in M_2$ is a pure state which turns into the  mixed state after taking the  partial trace.   The following example may be instructive: letting  $\rho_{2n}=\beta^{\otimes n}$ and $\rho_{2n+1}=\rho_{2n}\otimes \frac 1 2 (\ketbra {0} 0  +   \ketbra 1 1 ) $,  we obtain a state such that the even initial segments are pure and the odd ones are mixed.

 \begin{remark} \label{rem: measure} {\rm Suppose that all the $\rho_n$ are in diagonal form, and hence the entries of the corresponding matrices are in $[0,1]$. For each $\sss$ we can interpret $a_{\sss, \sss}$ as the probability that $\sss$ is an initial segment of a bit sequence: by (\ref{small matrix}) we have $a_{\sss, \sss} = a_{\sss 0, \sss 0} + a_{\sss 1 , \sss 1}$. In other words, the  $\rho  \in \+ \S(M_{2^\infty})$ with all $\rho_n$  in diagonal form correspond to the probability measures on $\cantor$. The tracial state $\tau$ defined in Subsection~\ref{ss:TS}  corresponds to the uniform measure. } \end{remark}

One can now view an infinite sequence of classical bits $Z \in \cantor$ as a state in $\S(M_{2^\infty})$,  which corresponds to the probability measure concentrating on $\{ Z \}$.    For  more detail,     recall that the Hilbert space $\+ H_{n}$ has as a base the vectors   $| \sss  \ra $, for a string $\sss$  of $n$ classical bits. 
   A  classical bit sequence $Z$ corresponds to the state  $(\rho_n)\sN n$,  where the bit  matrix $B=\rho_n \in \Malg {n}$  is given by  $b_{\sss, \tau }= 1  \LR  \sss = \tau = Z \uhr n$. For  $\sss = Z\uhr n$, $\rho_n$  is the pure state $ |   \sss \ra \la    \sss |$  on $M_{n}$.
 
%

\subsubsection*{Characterisation of quantum Cantor space} Cantor space can be characterised  as the projective (or inverse) limit of the discrete topological spaces  $X_n$ of $n$-bit strings with the maps $f_n \colon X_{n+1} \to X_n$ so that $f_n( \sss r) = \sss$, i.e., $f_n$ removes the last bit.   We now show that quantum Cantor space is a projective limit of the state sets of the $\Malg n$.
%
%
%
%

By a  \emph{convex  (topological) space} we mean  a topological space $X$ with a continuous operation  $F(\delta , a,b)=  \delta a + (1-\delta) b$, for $\delta  \in [0,1]$, $a,b \in X$, satisfying     obvious arithmetical axioms such as $F(\delta, a,a)= a$.
  Clearly  $\S (\Malg n)$ is a convex space.  A map $g \colon X \to Y$ between convex   spaces is called \emph{affine}  if $g(F(\delta , a,b)) = F(\delta , g(a),g(b))$ for each $a, b \in X$ and each $\delta$.  The \emph{projective limit} of a sequence $\seq{X_n}$ of convex spaces with continuous affine maps  $T_n \colon X_{n+1} \to X_n$, $n \in \NN$ (called a diagram), is the convex space  $P$ of all $\rho \in  \prod_n X_n$ such that $T_n(\rho(n+1) )= \rho(n)$ for each~$n$ with the subspace topology and the canonical operation~$F$.

Denote by $g_n \colon P \to X_n$ the map sending $\rho $ to $\rho(n)$.  The  projective  limit $P$  is characterised up to affine homeomorphism by as a  colimit from category theory:  we have $T_n  \circ g_{n+1} = g_n$ for each $n$, and  for any convex space $A$ with continous affine maps  $ f_n \colon A \to X_n$ such that $T_n  \circ f_{n+1} = f_n$ for each $n$, there is a unique continuous affine map $f\colon A  \to P$ such that $ g_n \circ f = f_n$ for each $n$.

 \begin{proposition} Consider the diagram consisting of  the $\S (\Malg n)$ together with the partial trace maps $T_n \colon \S(\Malg {n+1}) \to \S(\Malg n)$.    Seen  as a convex  space,  
  $\S(\Malg {\infty})$   is affinely homeomorphic to the projective limit of the $\S (\Malg n)$.  \end{proposition}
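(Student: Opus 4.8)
The plan is to show that the family of restriction maps is the comparison morphism into the projective limit, and then to upgrade it to an affine homeomorphism by a compactness argument. Write $P$ for the projective limit of the diagram $\seq{\S(\Malg n)}$ with the partial trace maps $T_n$, and let $g_n\colon P \to \S(\Malg n)$ be the canonical projections. Let $r_n\colon\S(\Malg\infty)\to\S(\Malg n)$ be the restriction map $\rho\mapsto\rho\uhr n$, and define $\Phi\colon \S(\Malg\infty) \to \prod_n \S(\Malg n)$ by $\Phi(\rho) = \seq{\rho\uhr n}\sN n$. First I would check that $\Phi$ lands in $P$: this is exactly Claim~\ref{fact:coherence}, which says $T_n(\rho\uhr{n+1}) = \rho\uhr n$. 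So $\Phi$ is a well-defined map $\S(\Malg\infty)\to P$ satisfying $g_n\circ\Phi=r_n$.

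Next I would verify that $\Phi$ is affine and continuous. Affineness is immediate, since restriction of functionals is linear, so $\Phi(\delta\rho+(1-\delta)\rho')=\seq{\delta(\rho\uhr n)+(1-\delta)(\rho'\uhr n)}\sN n$, which is the coordinatewise convex combination that defines the convex structure on $P$. For continuity, $P$ carries the subspace topology from the product, so it suffices that each $r_n=g_n\circ\Phi$ is continuous; but $\S(\Malg n)$ carries its own weak$*$ topology, generated by the evaluations $\sigma\mapsto\sigma(x)$ with $x\in\Malg n$, and for such $x$ one has $(\rho\uhr n)(x)=\rho(x)$, which is weak$*$-continuous in $\rho$ by the very definition of the weak$*$ topology on $\S(\Malg\infty)$ (here using $\Malg n\subseteq\Malg\infty$).

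Then I would observe that $\Phi$ is a bijection, which is precisely the content of the Fact proved above: surjectivity is the ``conversely'' direction, saying that every coherent sequence of density matrices arises as the restriction sequence of a unique state of $\Malg\infty$; injectivity holds because two states of $\Malg\infty$ that agree on every $\Malg n$ agree on the norm-dense $*$-subalgebra $\bigcup_n\Malg n$, hence everywhere by continuity. At this point $\Phi$ is a continuous affine bijection, and to conclude it is an affine homeomorphism it remains only to see that $\Phi^{-1}$ is continuous. Here I would invoke that $\S(\Malg\infty)$ is weak$*$-compact by Banach--Alaoglu (as noted before the Fact), while $P$ is Hausdorff, being a subspace of the product of the Hausdorff spaces $\S(\Malg n)$; a continuous bijection from a compact space onto a Hausdorff space is automatically a homeomorphism, so $\Phi^{-1}$ is continuous and $\Phi$ is the desired affine homeomorphism.

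I do not expect a genuine obstacle; the one point needing care is matching the topologies --- in particular, knowing that the weak$*$ topology on $\S(\Malg\infty)$ transports along $\Phi$ to the product topology on $P$ and not to something strictly finer --- and this is exactly what the Banach--Alaoglu compactness of $\S(\Malg\infty)$ takes care of, since a continuous affine bijection of non-compact convex spaces need not be open. An alternative, more categorical route would verify directly that $\S(\Malg\infty)$ together with the restriction maps $r_n$ satisfies the universal property defining $P$: given any convex space $A$ with compatible continuous affine maps $h_n\colon A\to\S(\Malg n)$, send $a\in A$ to the state attached by the Fact to the coherent sequence $\seq{h_n(a)}\sN n$; continuity of $a\mapsto h(a)$ follows from norm-density of $\bigcup_n\Malg n$ together with $\|h(a)\|=1$ (a uniform-limit-of-continuous-functions argument), and uniqueness of $h$ follows from density again. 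The uniqueness clause of the universal property then yields the affine homeomorphism. I would present the compactness argument as the main proof and mention this variant only as a remark.
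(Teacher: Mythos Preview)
Your proof is correct, but your main line of argument differs from the paper's. You construct the comparison map $\Phi\colon\S(\Malg\infty)\to P$ explicitly, check it is a continuous affine bijection, and then use Banach--Alaoglu compactness of $\S(\Malg\infty)$ together with Hausdorffness of $P$ to conclude that $\Phi$ is a homeomorphism. The paper instead proceeds exactly along your ``alternative, more categorical route'': it verifies the universal property of the projective limit directly for $\S(\Malg\infty)$ with the restriction maps, constructing for each cone $(A,f_n)$ the mediating map $f\colon A\to\S(\Malg\infty)$ and then checking its continuity by hand via a basic-open-set argument using norm density of $\bigcup_n\Malg n$. Your compactness route is shorter and sidesteps that continuity verification entirely; the paper's route, on the other hand, makes no appeal to compactness of the state space and would adapt to situations where one lacks it. Since you already sketched the categorical variant as a remark, you have in effect covered both proofs.
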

\begin{proof} Define $\widehat g_n \colon \S(\Malg {\infty}) \to \S(\Malg {n})$ by $\widehat g_n (\rho) = \rho \uhr {\Malg n}$.  By Claim~\ref{fact:coherence},  $T_n  \circ \widehat g_{n+1} = \widehat g_n$ for each $n$. It now suffices to verify the universal property for $\S(\Malg {\infty}) $  together with the maps $\widehat g_n \colon \S(\Malg {\infty}) \to \S(\Malg {n})$.  Suppose we are given a convex space $A$ with continous affine maps  $ f_n \colon A \to \S(\Malg n)$ such that $T_n  \circ f_{n+1} = f_n$ for each $n$. Let $f\colon A \to \S(\Malg {\infty}) $ be the  map such that $f(x) = \rho$ where $\rho $ is the state determined  as above by the sequence $\rho_n \in \S(\Malg {n})$ such that $\rho_n  = f_n(x)$. Clearly $f$ is affine. We  show that $f $ is continous. A basic open set  of $\S (\Malg \infty)$ with its weak-$*$ topology has the form \bc $U_{v, S} = \{ \rho \colon \, \rho(v) \in S\} $ \ec 
where $v \in \Malg \infty$ and $S \sub \mathbb C$ is open.  If $\rho \in U_{v, S} $ then there is $\varepsilon>0$ such that the open ball $B_\varepsilon (\rho(v))$ is contained in $S$. Furthermore, there  is  $k \in \NN$ and  $w \in  \Malg k$    such that $\norm  {v-w} < \varepsilon/2$. Since states have operator norm $1$, this implies $| \rho(w) - \rho(v)| < \varepsilon/2$. Letting $S'  = B_{\varepsilon/2}(\rho(v))$, we have $U_{w, S' } \sub U_{v, S}$.  

So for continuity of $f$ it suffices to show that $f^{-1}(U_{w, S'})$ is open for any $k$, any $w \in \Malg k$ and open $S' \sub \mathbb C$. By definition of $f$ we have $f^{-1}(U_{w, S'})=f_k^{-1}(\{ \theta\in \+ S(M_k) \colon \theta (w) \in U_{w, S'}\})$. Since $\rho(w) = \rho\uhr{\Malg k} (w)$,      $f^{-1}(U_{w, S'})$ is open  by the hypothesis that $f_k \colon A \to \S(\Malg k)$ is continous.  
\end{proof}

\section{Randomness for states    of $\Malg \infty$}  \label{s:qMLrd}

Our main purpose is  to introduce  and study  a   version of \ML's randomness notion for states on $\Malg \infty$,   Definition~\ref{df:qML}  below.
 We begin by recalling the definition of  \ML\  tests, phrased in such a  way that it can be easily  lifted to the quantum setting. 
 A clopen (i.e.\ closed and open) set $C$ in Cantor space can be  described  by a   set $F$ of  strings  of the same length~$k$ in the sense that $C = \bigcup_{\sss \in F} \{ Z \colon Z \succeq \sss\}$ (note that $k$  is not unique). A $\SI 1$ set $\+ S$ (or effectively open set) is a subset of   Cantor space $\cantor$  given in a computable way as   an ascending  union of clopen sets. In more detail, we have $\+ S = \bigcup C_n$ where $C_n$ is clopen, $C_n \sub C_{n+1}$,  and a finite description of $C_n$   can be computed from $n$. A  \ML\ \emph{test} is a     uniformly computable  sequence of $\SI 1$ sets $\seq {G_m}\sN m$  (i.e., $ G_m = \bigcup_k C^m_k$, where  the map sending a pair $m,k $  to a description of the clopen set $C^m_k$ is computable)
such that $\leb G_m  \le \tp {-m}$. Here  $\leb$ denotes the uniform measure, which ÊCyrilÊis obtained by viewing the $k$-th bit  as the result of the $k$-th toss of a fair coin, where all the  coin tosses are independent.  

  A sequence $Z\in \cantor$ is \ML\ random if it passes all such tests in the sense that $Z \not \in \bigcap_m G_m$. By the 1973 Levin-Schnorr theorem (see e.g.\ \cite[3.2.9]{Nies:book}), this is equivalent to the incompressibility condition on initial segments that for some   constant $b$,  for each $n$,   the prefix free Kolmogorov complexity of  the first $n$ bits of $Z$  is at least $n-b$. 

As an aside, if we add the restriction on tests that the measure of $G_m$ is  a computable real uniformly in $m$, we obtain the weaker notion of Schnorr randomness, now frequently used in the effective study of theorems from  analysis, e.g. \cite{Pathak.Rojas.ea:12}. This notion (as well as its variant, computable randomness) embody an alternative paradigm of randomness, namely that it is hard to predict the next bit from the previously seen ones. 


  In the following we will use    letters $\rho, \eta$  for    states on  $\Malg \infty$. We  write $\rho \uhr n$ for the restriction of $\rho$ to $\Malg n$, viewed either as a density matrix or a state of $\Malg n$.
After introducing    quantum ML-randomness   in  Definition~\ref{df:qML}, we will show that it ties in with the classical definition of ML-randomness. As mentioned above, any classical bit sequence defines a pure (product) quantum state of $\Malg \infty$, by mapping the bits to the corresponding basis elements in the computational basis. We then prove that for classical bit sequences, \ML\  randomness agrees with its quantum analog under this embedding. Even in the classical setting our notion  is broader, because  probability measures over  infinite  bit sequences can be viewed as states according to Remark~\ref{rem: measure}. For instance, the uniform measure on $\cantor$, seen as the tracial state $\tau$,  is quantum ML-random. So in the new setting, randomness of $\rho$ does not contradict that the function $n \to \rho \uhr n$ is computable.  Philosophically   there is some doubt  whether these states should be called random at all; the  term  ``unstructured"   appears   more apt, and only for classical bit sequences being unstructured actually implies being random. However, we   prefer the term ``random" here simply for  practical reasons.

%

%

\subsection{Quantum analog of Martin-L\"of tests}

Quite generally, a \emph{projection} in a $C^*$ algebra is a self-adjoint  positive operator $p$ such that  $p^2 = p$.  In the definition of ML-tests,  we will replace a clopen set  given by strings of length $n$ by a projection in $\Malg n$. However, we need to restrict its possible matrix entries to complex numbers that have  a finite description.

  \begin{definition} \label{def:Calg}  A complex number  $z$ is called \emph{algebraic} if it is the root of a polynomial with rational coefficients.
 Let $\Calg$ denote the  field of algebraic complex numbers. A matrix over $\Calg$ will be called \emph{elementary}.  \end{definition} Clearly such numbers  are given by a finite amount of information, and in fact they are polynomial time computable: for each $n$ one can in time polynomial in $n$ compute a Gaussian rational within $\tp{-n}$ of $z$.  Note that if the matrix determining an  operator on $\+ H_n$  consists of  algebraic complex numbers, then  its eigenvalues are in $\Calg$ and its  eigenvectors are vectors over $\Calg$. 
 We note that by a result of Rabin, $\Calg$ has a computable presentation:  there is  a 1-1 function
$f\colon \Calg \to \NN$ such that the image under  $f$ of the field operations are partial computable functions with computable domains.

 Suppose that a projection $p \in \Malg n$ is   diagonal with respect to the standard basis.  Since its only possible eigenvalues are $0,1$, the entries are $0$ or $1$. Thus,    projections in $M_n$ with a diagonal matrix  directly correspond to clopen sets in Cantor space.

 For $u,v \in \Malg n$ one writes $u\le v$ if $v-u$ is positive.  Note that $p \le q$ for projections   $p,q \in \Malg n$ means that the range of $p$ is contained in the range of~$q$.  A projection $p \in M_{n}$   with  matrix   entries in $\Calg$  will be called a  \emph{special projection}.   Such a projection   has a finite description,  given by the size of its matrix and all the entries in  its  matrix.     
 \begin{definition} \label{df:quantum S1} A quantum $\SI 1$ set   (or q-$\SI 1$ set for short) $G$  is   a computable    sequence of special  projections $ (p_i)\sN i$  such that  $p_i \in \Malg i$ and  $p_i \le p_{i+1}$ for each $i$.   \end{definition}
 We note that the limit of an increasing sequence of projections  $\seq{p_i} \sN i$ does not necessarily exist in $\Malg \infty$. For,  the limit would be a projection itself, and the projection lattice of $\Malg \infty$ is not complete. See e.g.\ \cite{Furber:19} where it is shown that its completeness would yield an embedding of the non-separable $C^*$-algebra $\ell^\infty(\mathbb C)$ into $\Malg \infty$, which  is not possible as the latter is separable. 
 
 Recall that for a state $\rho$ and $p \in \Malg n$ we have $\rho(p) = \tr (\rho\uhr n p)$ by (\ref{eqn:DO}). We write $\rho(G) = \sup_i \rho(p_i)$.   In particular we let $\tau(G) = \sup_i  \tau(p_i)$, where $\tau $ is the tracial state defined in Section~\ref{ss:TS}. Note that $\tau(G)$ is a   real in $[0,1]$ that is the supremum of a computable sequence of rationals. Each $\SI 1$ set  $\+ S$ in Cantor space is an effective union of clopen sets,  and hence  can be seen as a q-$\SI 1$ set. If the state $\rho$ is a measure then  $\rho(\+ S)  $    yields the  usual result: evaluating $\rho$ on $\+ S$.

We chose the term ``quantum $\SI 1$ set'' by analogy with the notion of $\SI 1$ subsets of Cantor space; they are  not actually  sets. The physical intuition is that a projection $p_i \in \Malg i$ describes a measurement of $\rho \uhr {\Malg i}$, strictly speaking given by the pair of projections $(p_i, I_{\Malg i} -p_i)$. Then  $\rho(p_i) = \tr (\rho \uhr {\Malg i }  p_i)$ is the outcome of the measurement, the  probability that the first alternative given by the measurement occurs, and $\rho(G)$ is the overall outcome  of measuring the state.  So one could view $G$ as a probabilistic set of states: for $0\le \delta \le 1$, $\rho$ is in    $G$  with probability  $\delta$ if $\rho(G) > \delta$. The ``inclusion relation" is $G \le H$ if  $\rho(G) \le  \rho(H)$  for each state $\rho$.  We note that a  ``level set'' of states $\{ \rho \colon \,  \rho(G) > \delta \}$ is open in $\+ S(\Malg \infty)$ with the weak $*$ topology (Section~\ref{s:qCantor}).

%
%
%
For special projections  $p, q \in \Malg k$, we  denote by $p \vee q$ the projection   in $M_k$ with range   $\range \, p + \range \, q$.    We have $\tau(p \vee q) \le \tau(p) + \tau(q)$.
   For quantum $\SI 1$ sets $G = \seq {p_k}\sN k$ and $H = \seq {q_k}\sN k$ we define $G \vee H$ to be $\seq {p_k \vee q_k}\sN k$. Note that again $\tau(G \vee H) \le \tau(G) + \tau(H)$.  Inductively  we then  define $G_1 \vee \ldots \vee G_r$ for $r \ge 2$.

%
   
%
\begin{definition}[Quantum Martin-L\"of randomness] \label{df:qML} A \emph{quantum Martin-L\"of test}  (qML-test)  is an effective  sequence $\seq {G_r}\sN r$ of   quantum $\SI 1$ sets such that $\tau(G_r) \le \tp{-r}$ for each $r$. 
%
For  $\delta \in (0,1)$, we     say that $\rho$ \emph{fails} the qML test \emph{at order} $\delta$   if   $ \rho(G_r) > \delta$ for each $r$; otherwise $\rho$ \emph{passes} the qML test \emph{at order} $\delta$. 
We say that $\rho$ is \emph{quantum ML-random}  if it passes each qML test  $\seq {G_r}\sN r$ at each positive order, that is,  $\inf_r \rho(G_r)=0$. 
 \end{definition}

 \begin{proposition} \label{prop:uML} There is a   qML-test $\seq {R_n}$ such that for each qML test $\seq {G_k}$ and each state $\rho$, for each $n$ there is $k$ such that  $\rho(R_n) \ge  \rho(G_k)$.  In particular, the test is universal in the sense that $\rho$ is qML random iff $\rho$ passes this single test.  \end{proposition}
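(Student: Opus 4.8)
The plan is to mimic the classical construction of a universal Martin-L\"of test, adapting it to the quantum setting. First I would observe that the set of all qML-tests can be effectively enumerated. By definition, a qML-test is an effective sequence $\seq{G_r}\sN r$ of quantum $\SI 1$ sets with $\tau(G_r)\le\tp{-r}$. Each quantum $\SI 1$ set is a computable sequence of special projections, so I can enumerate a sequence of partial computable functions $\seq{\Phi_e}\sN e$ that are candidates for describing qML-tests. The issue is that, unlike the classical case where one can effectively impose the measure bound by ``clipping'' an open set once its measure threatens to exceed $\tp{-m}$, here the bound $\tau(G_r)\le\tp{-r}$ involves a supremum of a computable sequence of rationals, which is only a left-c.e.\ real. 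So I would use the standard trick: from the $e$-th candidate I build a genuine qML-test $\seq{G^e_r}\sN r$ by running $\Phi_e$ and, at stage $i$, only incorporating the projection $p_i$ into $p^e_i$ (via $\vee$ with what has been built so far, truncated appropriately) as long as doing so keeps $\tau(p^e_i)\le\tp{-r}$; if at some point the bound would be violated, I freeze the sequence from then on (repeat the last projection). This yields an effective double sequence $\seq{G^e_r}\sN{e,r}$ of quantum $\SI 1$ sets, each $\seq{G^e_r}\sN r$ a legitimate qML-test, and such that if $\Phi_e$ really does describe a qML-test $\seq{G_r}$ then $G^e_r = G_r$ for all $r$.

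Next I would amalgamate these into a single test. Define $R_n = \bigvee_{e} G^e_{n+e+1}$, where the join is taken in the quantum $\SI 1$ sense coordinatewise — but note this is an infinite join, so I must be careful: at coordinate $i$, only the projections $p^e_i$ for $e$ up to some bound contribute nontrivially in a computable way, and I would instead take $R_n = \bigvee_{e\le i} (\text{$i$-th projection of } G^e_{n+e+1})$ at coordinate $i$, which is an effectively computable special projection since it is a finite join. Using $\tau(G^e_{n+e+1})\le\tp{-(n+e+1)}$ and the subadditivity $\tau(G\vee H)\le\tau(G)+\tau(H)$, which was established in the excerpt, I get $\tau(R_n)\le\sum_{e\in\NN}\tp{-(n+e+1)}=\tp{-n}$. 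So $\seq{R_n}\sN n$ is a qML-test. Moreover, for any qML-test $\seq{G_k}$, say coded by $\Phi_e$, and any $n$, taking $k=n+e+1$ we have $G^e_k = G_k$ and $R_n \ge G^e_k$ in the inclusion order, hence $\rho(R_n)\ge\rho(G^e_k)=\rho(G_k)$ for every state $\rho$, which is exactly the claimed property. The ``in particular'' clause is then immediate: if $\rho$ passes $\seq{R_n}$ at every positive order, i.e.\ $\inf_n\rho(R_n)=0$, then for any qML-test $\seq{G_k}$ and any $\delta>0$ we can find $n$ with $\rho(R_n)<\delta$ and then $k$ with $\rho(G_k)\le\rho(R_n)<\delta$, so $\rho$ passes $\seq{G_k}$ at order $\delta$; the converse is trivial since $\seq{R_n}$ is itself a qML-test.

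The main obstacle, and the point requiring genuine care, is the construction of the ``truncated'' test $\seq{G^e_r}\sN r$ — specifically, verifying that one can effectively maintain the invariant $\tau(p^e_i)\le\tp{-r}$ while preserving monotonicity $p^e_i\le p^e_{i+1}$. Because $\tau(p_i)$ for a special projection $p_i$ is a rational (it equals $\tp{-i}\dim\range(p_i)$, a rational), each individual trace is exactly computable, so the test ``$\tau(p^e_i)\le\tp{-r}$?'' is decidable at each stage; the subtlety is only that the candidate sequence from $\Phi_e$ might not be monotone or might not be total, in which case I also freeze. One should check that the join of the already-built projection with the new candidate, when its trace stays under threshold, is still $\le$ the next one — here I may need to define $p^e_i$ as the join of $p^e_{i-1}$ (embedded into $\Malg i$) with the $i$-th candidate projection, accepting it only if the resulting trace is within bound, which automatically gives monotonicity. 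A second minor point to verify is that the coordinatewise finite join $R_n$ is genuinely a quantum $\SI 1$ set: monotonicity $R_n$ at coordinate $i$ versus coordinate $i+1$ follows because each $G^e$ is monotone and the index set $\{e\le i\}$ only grows. With these checks in place the argument goes through exactly as in the classical case.
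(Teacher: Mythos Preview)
Your proposal is correct and follows essentially the same approach as the paper: enumerate all qML-tests as $\seq{G^e_r}$, define $R_n$ by taking at each coordinate $k$ the finite join of the $k$-th projections of the $G^e_{n+e+1}$ for small $e$, and bound $\tau(R_n)$ by $\sum_e \tp{-(n+e+1)} = \tp{-n}$ via subadditivity. The paper is terser --- it simply asserts that an effective listing of qML-tests exists without spelling out the truncation procedure you describe --- but your construction and the comparison $\rho(R_n)\ge\rho(G^e_{n+e+1})$ match the paper's proof exactly.
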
 
 
\begin{proof} This follows the usual construction due to \ML; see e.g.\ \cite[3.2.4]{Nies:book}. We may fix an   effective listing   $\seq{G^e_m}\sN m$ ($e \in \NN$) of all the quantum ML tests, where $G^e_m = \seq{ p_{m,r}^{e}}\sN r$ for projections $p_{m,r}^{e}$ in  $\Malg r$.  Informally, we let $$R_n = \bigvee_{e} G^e_{e+n+1}.$$  However,  this infinite supremum of quantum $\SI 1$ sets  is actually not defined. To interpret it,  think of 
 $ \bigvee_{e} G^e_{e+n+1} $ as $ \bigvee_{e} \bigvee_r p_{e+n+1,r}^{e}$ (which is still not defined).  Now let 
\bc  $q^n_k = \bigvee_{e+n+1 \le k} p_{e+n+1,k}^{e}$ \ec 
which is a finite supremum of projections.  Clearly $q^n_k \le q^n_{k+1}$, and $\tau (q^n_k) \le \sum_e \tau (p_{e+n+1,k}^{e}) \le \tp{-n}$.  We   let  $R_n = \seq {q^n_k} \sN k$, so that $\tau(R_n) \le \tp{-n}$.  Hence  $\seq {R_n}\sN n$ is a quantum ML-test.

Fix  $e$. For each $n$ we have 
\begin{eqnarray*}  \rho(R_n)  =  \sup_k \rho(q^n_k ) &  \ge &   \sup_k  \rho(p^e_{n+e+1,k})  \\ 
& = &  \rho(G^e_{n+e+1}).\end{eqnarray*}
%
%
 \end{proof}

%

%
%
%

A basic property one  expects of random bit sequences $Z$ is the   law of large numbers, namely $\lim_{n \to \infty}  \frac 1 n \sum_{i=0}^{n-1}Z(i) =1/2$. This property holds for \ML\, (or even Schnorr) random bit sequences; see e.g.\ \cite[3.5.21]{Nies:book}.  The first author and Tomamichel have proved a version of the  law of large numbers for qML-random states. For $i<n$ let $S_{n,i}$ be the subspace of $\mathbb C^{2^n}$ generated by those $\underline \sigma$ with $\sigma_i =1$. It is as usual identified with its orthogonal projection. So for any state $\rho$ on $\Malg \infty$, the real $\rho(S_{n,i}) = \tr (\rho\uhr n  S^n_i)$ is the probability that  a  measurement of the $i$-th qubit of its initial segment $\rho\uhr n$ returns $1$. 

\begin{prop}[\cite{LogicBlog:17}, Section 6.6]  Let $\rho$ be a qML-random state. We have \bc $\lim_n \frac 1 n \sum_{i< n}  \rho (S_{n,i}) = 1/2$. \ec    \end{prop}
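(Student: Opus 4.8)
The plan is to adapt the classical proof of the law of large numbers for ML-random sequences (see e.g.\ \cite[3.5.21]{Nies:book}) to the quantum setting by building, for each rational $\varepsilon>0$, a single qML-test whose failure captures the bad event that the running averages of the $\rho(S_{n,i})$ stray from $1/2$ by more than $\varepsilon$ infinitely often. First I would fix a rational $\varepsilon > 0$ and, for each $n$, consider the self-adjoint operator $A_n = \frac1n \sum_{i<n} S_{n,i}$ on $\mathbb C^{2^n}$, which is diagonal in the standard basis with eigenvalue $\frac1n(\text{number of 1's in }\sigma)$ on $\underline\sigma$. The key combinatorial input is a Chernoff/Hoeffding bound: the spectral projection $p_{n}^{\varepsilon}$ onto the eigenspaces of $A_n$ with eigenvalue outside $(1/2-\varepsilon, 1/2+\varepsilon)$ satisfies $\tau(p_n^\varepsilon) \le 2e^{-2\varepsilon^2 n}$, since $\tau$ is just the uniform measure on $2^n$ strings and this is the classical tail bound for the binomial. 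These $p_n^\varepsilon$ are diagonal, hence special projections, and computable uniformly in $n$ and $\varepsilon$.

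Next I would package these into quantum $\SI 1$ sets. For a threshold $m$, choose a computable function $n_0 = n_0(\varepsilon,m)$ with $\sum_{n \ge n_0} 2e^{-2\varepsilon^2 n} \le 2^{-m}$, and form the q-$\SI 1$ set $G_m^\varepsilon = \seq{q_k}$ where $q_k = \bigvee_{n_0 \le n \le k} (p_n^\varepsilon \text{ embedded into } \Malg k)$; here the embedding of a projection from $\Malg n$ into $\Malg k$ is the one from~(\ref{eqn:embed}), which preserves $\tau$, and the join is the span-union of ranges, so $\tau(q_k) \le \sum_{n_0\le n\le k}\tau(p_n^\varepsilon) \le 2^{-m}$. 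Thus $\seq{G_m^\varepsilon}\sN m$ is a genuine qML-test. Since $\rho$ is qML-random, $\inf_m \rho(G_m^\varepsilon) = 0$, so there is $m$ with $\rho(G_m^\varepsilon)$ as small as we like; in particular for every $\varepsilon$ there is $m_\varepsilon$ and hence $N_\varepsilon := n_0(\varepsilon, m_\varepsilon)$ such that $\rho$ already ``escapes'' the tail, meaning $\sup_{k}\rho(q_k) < 1$, actually $<$ any prescribed small bound. The point is to translate this into: for all but finitely many $n$, $\rho(p_n^\varepsilon)$ is small. Concretely, $\rho(p_n^\varepsilon) \le \rho(q_k)$ for $k \ge n \ge N_\varepsilon$, and qML-randomness forces $\inf_m\rho(G_m^\varepsilon)=0$, which (since the $q_k$ for a fixed $m$ include $p_n^\varepsilon$ for all large $n$) yields $\limsup_n \rho(p_n^\varepsilon) = 0$; so for large $n$, $\rho(p_n^\varepsilon) < \varepsilon$.

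Finally I would deduce the conclusion. Fix $\varepsilon$ and take $n$ large enough that $\rho(p_n^\varepsilon) < \varepsilon$. Writing $\rho\uhr n$ for the density matrix of the initial segment and $p = p_n^\varepsilon$, decompose $\frac1n\sum_{i<n}\rho(S_{n,i}) = \rho(A_n) = \tr((\rho\uhr n) A_n)$. On the range of $I - p$ the operator $A_n$ has spectrum in $(1/2-\varepsilon, 1/2+\varepsilon)$, while on the range of $p$ we only have $0 \le A_n \le I$; hence $|\rho(A_n) - 1/2| \le \varepsilon \cdot \rho(I-p) + 1 \cdot \rho(p) \le \varepsilon + \varepsilon = 2\varepsilon$ for $n$ large, using $0\le A_n\le I$ so $|\rho((I-p)A_n(I-p)) - \tfrac12\rho(I-p)|\le\varepsilon\rho(I-p)$ and bounding the cross/$p$-terms crudely by $\rho(p)$. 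Letting $\varepsilon \to 0$ gives $\lim_n \frac1n\sum_{i<n}\rho(S_{n,i}) = 1/2$.

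The main obstacle I anticipate is the bookkeeping around the join operation $\bigvee$: unlike the classical union of clopen sets, $p \vee q$ only satisfies $\tau(p\vee q)\le\tau(p)+\tau(q)$ and does not interact simply with $\rho$, so one must be careful that $\rho(p_n^\varepsilon) \le \rho(q_k)$ genuinely holds (it does, since $p_n^\varepsilon \le q_k$ as projections, and $\rho$ is monotone on positive operators) and that the embeddings from $\Malg n$ into $\Malg k$ preserve both $\tau$ and the ordering — this is exactly~(\ref{eqn:embed}) together with the compatibility of the $\tau_n$ noted in Section~\ref{ss:TS}. The other delicate point is making $n_0(\varepsilon,m)$ and the whole sequence of projections genuinely computable (so that $\seq{G_m^\varepsilon}$ is an effective sequence of q-$\SI1$ sets); but since the $p_n^\varepsilon$ are $0/1$ diagonal matrices determined by a binomial-tail threshold, this is routine, and the exponential bound $2e^{-2\varepsilon^2 n}$ makes the choice of $n_0$ explicit. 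Everything else is the standard Chernoff-bound-meets-ML-test argument, now read off the diagonal.
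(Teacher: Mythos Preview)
Your proposal is correct and follows essentially the same approach as the paper's cited argument, which is described there only as being ``based on Chernoff bounds.'' You have supplied exactly those details: the diagonal spectral projections $p_n^\varepsilon$ of $A_n=\frac1n\sum_{i<n}S_{n,i}$, the Hoeffding tail bound $\tau(p_n^\varepsilon)\le 2e^{-2\varepsilon^2 n}$, and the packaging into a qML-test via joins, followed by the spectral estimate $|\rho(A_n)-\tfrac12|\le \varepsilon\,\rho(I-p_n^\varepsilon)+\rho(p_n^\varepsilon)$ (here the commutation of $A_n$ with $p_n^\varepsilon$, both being diagonal, makes the ``cross terms'' vanish, so your crude bound is in fact exact). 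The only cosmetic point is that your intermediate sentence ``meaning $\sup_k\rho(q_k)<1$, actually $<$ any prescribed small bound'' is muddled; what you actually need and later use is $\limsup_n\rho(p_n^\varepsilon)\le \rho(G_m^\varepsilon)$ for every $m$, hence $\limsup_n\rho(p_n^\varepsilon)=0$, which you state correctly.
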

Their argument is based on Chernoff bounds. It   works in more generality for any  computable bias $r$ in place of  1/2, and  for states that are qML-random with respect to that bias, as detailed in \cite{LogicBlog:17}, Section 6.6.  

\subsection{Comparison with ML-randomness for  bit sequences}
 Recall from  Subsection~\ref{ex:diracmeasure} that each bit sequence $Z$ can be viewed as a state on $\Malg \infty$. In this   section we  show that $Z$ if  ML-random iff $Z$ viewed as a state  is qML-random (Thm.\ \ref{prop:ML bit sequence}).   Each classical ML-test can be viewed as  quantum ML-test, so quantum ML-randomness implies ML-randomness for $Z$. For the converse implication, the idea is to turn a quantum ML-test that $Z$ fails at order $\delta$    into a classical test that $Z$ fails.  We need a few  preliminaries. We thank the anonymous    referees for suggesting simplifications implemented in  the argument below.

Recall that the    vectors $\ul \sss$,  for $n$-bit strings $\sss$, form   the standard basis of~$\+ H_n$. Note that if   $p \in \Malg k$ is  a projection and $\eta$ is a bitstring of length $k$,  then $\tr ( \ketbra{\ul \eta}{  \ul \eta } p) = ||p(\ul \eta) ||^2 = \la \ul \eta | p | \ul  \eta\ra$.  Given a bit sequence $Z$, letting  $\eta = Z\uhr k$, we have    $  Z(p) = \tr (\ketbra {\ul \eta} {\ul \eta} p)$ (if $Z$ is viewed as a state then   $Z\uhr k$ is viewed as the   density matrix    $| \ul \eta \ra \la \ul \eta |$  in Dirac notation).  
  \begin{definition} Fix $k\in \NN$, and let    $p \in \Malg k$ be a projection. For $\delta > 0$ define
	\begin{equation} \label{eqn:S}  S= S^k_{p, \delta} = \{  \eta\in \{0,1\}^k  \colon \,  \delta \le   \tr ( \ketbra{\ul \eta}{  \ul \eta } p) \}.\end{equation} \end{definition}
%
	 

In the following we identify the  set $S$ of strings of length $k$  in (\ref{eqn:S}) with the corresponding diagonal projection in $\Malg k$. By $|S|$ we denote the size of the set $S$. 
	\begin{claim}  \label{claim:boundS} $\tau(S) \le \tau(p) / \delta$.
	\end{claim}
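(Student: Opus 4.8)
The claim bounds the trace of the diagonal projection $S = S^k_{p,\delta}$ by $\tau(p)/\delta$. The natural approach is to estimate $\tau(S)$ from below using the defining inequality of $S$ and to estimate $\tau(p)$ from below by restricting attention to the basis vectors indexed by $S$. Concretely, first I would write $\tau(S) = 2^{-k}|S|$, since $S$ is identified with the diagonal projection whose rank equals $|S|$. Next, I would lower-bound $\tau(p) = 2^{-k}\tr(p)$ by expanding the trace in the standard basis: $\tr(p) = \sum_{\eta \in \{0,1\}^k} \la \ul\eta \mid p \mid \ul\eta\ra \ge \sum_{\eta \in S} \la \ul\eta \mid p \mid \ul\eta\ra$, where the inequality uses that every diagonal entry $\la \ul\eta \mid p \mid \ul\eta\ra = \|p(\ul\eta)\|^2$ is non-negative because $p$ is a positive operator.

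Now I would invoke the definition of $S$: for each $\eta \in S$ we have $\la \ul\eta \mid p \mid \ul\eta\ra = \tr(\ketbra{\ul\eta}{\ul\eta}\, p) \ge \delta$. Summing over $\eta \in S$ gives $\sum_{\eta \in S} \la \ul\eta \mid p \mid \ul\eta\ra \ge \delta\,|S|$, hence $\tr(p) \ge \delta\,|S|$. Multiplying through by $2^{-k}$ yields $\tau(p) \ge \delta\, 2^{-k} |S| = \delta\,\tau(S)$, and dividing by $\delta$ gives $\tau(S) \le \tau(p)/\delta$, which is exactly the claim.

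This is essentially a one-line counting argument — a discrete analog of Markov's inequality applied to the diagonal of $p$ — so I do not anticipate a genuine obstacle. The only points requiring a moment of care are bookkeeping ones: that the identification of the string set $S$ with a diagonal projection really does make $\tr(S) = |S|$ (so that $\tau(S) = 2^{-k}|S|$), and that positivity of $p$ is what licenses dropping the terms with $\eta \notin S$ from the trace sum. Both follow directly from material already set up in the excerpt (the formula $\tr(\ketbra{\ul\eta}{\ul\eta}\,p) = \la \ul\eta \mid p \mid \ul\eta\ra$ and the fact that positive operators have non-negative diagonal entries in any orthonormal basis).
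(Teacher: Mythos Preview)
Your proof is correct and is essentially identical to the paper's own argument: the paper writes the one-line chain $\delta\,|S| \le \sum_{\eta \in S} \tr(\ketbra{\ul\eta}{\ul\eta}\,p) \le \sum_{\eta} \tr(\ketbra{\ul\eta}{\ul\eta}\,p) = \tr(p)$ and then multiplies by $2^{-k}$, which is exactly your Markov-inequality computation spelled out.
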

\begin{proof}   
 $\delta |S| \le \sum_{\eta \in S} \tr ( \ketbra {\ul \eta } {\ul \eta} p) \le    \sum_{\eta } \tr ( \ketbra {\ul \eta } {\ul \eta} p)  = \tr (p)$,  
 
 \n   so
 $|S| \tp{-k} \le \tr(p) \tp{-k} / \delta = \tau(q)/ \delta$.   
\end{proof}

 
	\begin{claim}  \label{claim:next k}  Suppose  $p\in \Malg	k$ are as above. Then \[ S^{k+1}_{p', \delta} =  \{ \eta a  \colon \, |\eta|=k, a=0,1  \lland \eta \in S^k_{p, \delta} \},\]
	where $p'$ is the lifting of  $p$  to $\Malg {k+1}$.
	\end{claim}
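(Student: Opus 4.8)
The plan is to compute both sides directly from the definition of $S^{k}_{p,\delta}$ in~(\ref{eqn:S}), using the explicit formula~(\ref{eqn:embed}) for the lifting $p' = p \otimes I_2$ and the compatibility of the standard bases of $\+ H_k$ and $\+ H_{k+1} = \+ H_k \otimes \mathbb C^2$. First I would note that, for an $n$-bit string $\eta$ and a projection $p$, the quantity controlling membership is $\tr(\ketbra{\ul\eta}{\ul\eta}\, p) = \la \ul\eta \mid p \mid \ul\eta\ra$, so membership of $\eta$ in $S^k_{p,\delta}$ is exactly the condition $\la \ul\eta \mid p \mid \ul\eta\ra \ge \delta$. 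Thus it suffices to show that for $|\eta| = k$ and $a \in \{0,1\}$ we have $\la \ul{\eta a} \mid p' \mid \ul{\eta a}\ra = \la \ul\eta \mid p \mid \ul\eta\ra$.

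The key step is this last identity. Under the identification $\+ H_{k+1} = \+ H_k \otimes \mathbb C^2$, the basis vector $\ul{\eta a}$ is $\ul\eta \otimes \ket a$, and $p' = p \otimes I_2$ by~(\ref{eqn:embed}). Hence
\[
\la \ul{\eta a} \mid p' \mid \ul{\eta a}\ra = \la \ul\eta \otimes a \mid (p \otimes I_2)(\ul\eta \otimes a)\ra = \la \ul\eta \mid p \mid \ul\eta\ra \cdot \la a \mid I_2 \mid a\ra = \la \ul\eta \mid p \mid \ul\eta\ra,
\]
using the definition of the inner product on a tensor product and $\la a \mid a \ra = 1$. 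Equivalently, in the block-matrix picture from the ``Embeddings between matrix algebras'' subsection, $p'$ has the form $\left(\begin{smallmatrix} p & 0 \\ 0 & p\end{smallmatrix}\right)$ with rows/columns indexed by strings $\eta a$ in reverse binary, so the diagonal entry $p'_{\eta a,\, \eta a}$ equals $p_{\eta,\eta}$ for both $a = 0$ and $a = 1$. With this, $\eta a \in S^{k+1}_{p',\delta}$ iff $p'_{\eta a,\eta a} \ge \delta$ iff $p_{\eta,\eta} \ge \delta$ iff $\eta \in S^k_{p,\delta}$, independently of $a$, which is exactly the claimed set equality.

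I do not expect a genuine obstacle here; the only thing to be careful about is the bookkeeping of the tensor-factor ordering, i.e.\ that ``lifting from $\Malg k$ to $\Malg{k+1}$'' appends the new qubit on the right (tensoring with $I_2$ on the right) and that the reverse-binary indexing convention adopted in the excerpt makes the last bit $a$ the one corresponding to the new $\mathbb C^2$ factor. Once that convention is spelled out, the identity $\la \ul{\eta a}\mid p'\mid \ul{\eta a}\ra = \la\ul\eta\mid p\mid\ul\eta\ra$ is immediate and the claim follows.
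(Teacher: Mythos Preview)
Your proof is correct and follows essentially the same approach as the paper: both arguments reduce the claim to showing that the scalar controlling membership is preserved under lifting, via the tensor decomposition $\ul{\eta a} = \ul\eta \otimes \ket a$ and $p' = p \otimes I_2$. The only cosmetic difference is that the paper phrases this via $\|p(\ul\eta)\|^2 = \|p'(\ul{\eta a})\|^2$ (using $p'(\ul{\eta a}) = p(\ul\eta)\otimes \ul a$), whereas you use the equivalent form $\la \ul\eta \mid p \mid \ul\eta\ra = \la \ul{\eta a}\mid p'\mid \ul{\eta a}\ra$; both identities appear just before the definition of~$S^k_{p,\delta}$.
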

\begin{proof}  
For $\eta, a$ as above we have $p'(\ul{\eta a }) = p(\ul \eta) \otimes \ul a$ and so $\tr ( \ketbra {\ul \eta } {\ul \eta} p) =  ||p(\ul \eta) ||^2=  ||p'(\ul { \eta a}) ||^2 = \tr ( \ketbra {\ul  {\eta a }} {\ul { \eta a}} p')$.
\end{proof}


   \begin{theorem} \label{prop:ML bit sequence} Suppose $Z \in \cantor$. Then $Z$ is ML-random iff   $Z$ viewed as an element of $\S(M_{2^\infty})$ is qML-random. \end{theorem}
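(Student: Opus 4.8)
The plan is to prove the two implications separately, using the preliminary Claims \ref{claim:boundS} and \ref{claim:next k} to convert a quantum test into a classical one. The forward direction ($Z$ qML-random $\Rightarrow$ $Z$ ML-random) is essentially immediate: given a classical ML-test $\seq{\+ U_m}$, each $\+ U_m$ is an effective union of clopen sets, hence by the remark following Definition \ref{df:quantum S1} can be viewed as a quantum $\SI 1$ set $G_m$ with $\tau(G_m) = \leb(\+ U_m) \le \tp{-m}$; so $\seq{G_m}$ is a qML-test, and for the state associated to $Z$ one has $Z(G_m) = 1$ whenever $Z \in \+ U_m$ (the diagonal projection coincides with the indicator of the clopen set and $Z$ picks out the diagonal entry indexed by its initial segment). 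Thus if $Z \in \bigcap_m \+ U_m$ then $Z(G_m) = 1$ for all $m$, so $\inf_m Z(G_m) = 1 \ne 0$ and $Z$ is not qML-random.

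For the converse, suppose $Z$ is ML-random and fix a qML-test $\seq{G_r}\sN r$ with $G_r = \seq{p_{r,k}}\sN k$; I must show $\inf_r Z(G_r) = 0$, equivalently that $Z$ does not fail at any order $\delta \in (0,1)$. Fix such a $\delta$ and suppose toward a contradiction that $Z(G_r) > \delta$ for every $r$. For each $r$ and $k$ form the diagonal projection $S^k_{p_{r,k}, \delta}$ from \eqref{eqn:S}, viewed as a set of $k$-bit strings, and set $\+ V_r = \bigcup_k \Opcl{S^k_{p_{r,k},\delta}}$, the corresponding $\SI 1$ subset of Cantor space. By Claim \ref{claim:next k} (applied along the liftings that define a quantum $\SI 1$ set) the sets $S^k_{p_{r,k},\delta}$ are coherent as $k$ grows in the sense that each string in $S^k_{p_{r,k},\delta}$ has at least one extension contributing at level $k+1$; this ensures the clopen-set descriptions are genuinely increasing, so $\+ V_r$ is a legitimate $\SI 1$ set, uniformly in $r$. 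By Claim \ref{claim:boundS}, $\leb(\Opcl{S^k_{p_{r,k},\delta}}) = \tau(S^k_{p_{r,k},\delta}) \le \tau(p_{r,k})/\delta \le \tp{-r}/\delta$, so $\leb(\+ V_r) \le \tp{-r}/\delta$. Shifting the index (replace $\+ V_r$ by $\+ V_{r + \lceil \log(1/\delta)\rceil}$, or equivalently work with the test $\seq{\+ V_{r+c}}$ for a suitable constant $c$ depending only on $\delta$) yields a genuine classical ML-test. Finally, for each $r$ we have $Z(p_{r,k}) = \tr(\ketbra{\ul\eta}{\ul\eta} p_{r,k})$ where $\eta = Z\uhr k$, and since $Z(G_r) = \sup_k Z(p_{r,k}) > \delta$ there is some $k$ with $\tr(\ketbra{\ul\eta}{\ul\eta}p_{r,k}) > \delta$, i.e. $Z\uhr k \in S^k_{p_{r,k},\delta}$, so $Z \in \+ V_r$. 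Hence $Z \in \bigcap_r \+ V_r$, contradicting ML-randomness of $Z$.

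The step I expect to require the most care is verifying that the $\SI 1$ sets $\+ V_r$ are well-formed and uniformly computable --- specifically that the matrix entries of the special projections $p_{r,k}$ are algebraic, so the comparisons $\delta \le \tr(\ketbra{\ul\eta}{\ul\eta}p_{r,k})$ defining $S^k_{p_{r,k},\delta}$ are decidable uniformly in $r,k,\eta$ (one may need $\delta$ rational, or approximate from below, which only shrinks the $S$-sets and keeps the measure bound). The diagonal entry $\la\ul\eta | p_{r,k} | \ul\eta\ra$ of a matrix over $\Calg$ lies in $\Calg$ and, since $p_{r,k}$ is positive, is a real in $[0,1]$; using Rabin's computable presentation of $\Calg$ mentioned after Definition \ref{def:Calg} one can decide whether it is $\ge \delta$ for rational $\delta$, and for irrational $\delta$ it suffices to replace $\delta$ by a slightly smaller rational, which preserves all the inequalities used. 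The coherence point (Claim \ref{claim:next k}) is the other item to state cleanly, but it is exactly the content already proved, so no real obstacle remains there.
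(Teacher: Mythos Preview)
Your argument is correct and follows essentially the same route as the paper's: for the nontrivial direction, convert a qML-test failed by $Z$ at order $\delta$ into the classical $\SI 1$ sets $\+ V_r = \bigcup_k \Opcl{S^k_{p_{r,k},\delta}}$, bound their measure via Claim~\ref{claim:boundS}, and observe that $Z$ lies in every $\+ V_r$. One point to tighten: to get $\leb(\+ V_r) \le \tp{-r}/\delta$ you need the clopen sets $\Opcl{S^k_{p_{r,k},\delta}}$ to be increasing in $k$, and Claim~\ref{claim:next k} by itself only gives $S^{k+1}_{p_{r,k}\otimes I_2,\delta} = \{\eta a : \eta \in S^k_{p_{r,k},\delta},\ a\in\{0,1\}\}$ (both extensions, not merely ``at least one'', which is what makes the \emph{clopen} sets nest); you must then invoke $p_{r,k}\otimes I_2 \le p_{r,k+1}$ from the definition of a quantum $\SI 1$ set to obtain $S^{k+1}_{p_{r,k}\otimes I_2,\delta} \subseteq S^{k+1}_{p_{r,k+1},\delta}$, exactly as the paper does.
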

      \begin{proof}  Suppose $Z$ fails the qML-test $\seq{G^r}\sN r $ at order  $\delta>0$,  where $G^r$ is given by the sequence $\seq {p^r_k}\sN {k}$. Thus $\fa r \ex k \, Z(p^r_k) > \delta$, and $\sup_k \tau(p^r_k) \le \tp{-r}$.  
      Uniformly in $r$ we will define a $\SI 1$ set $V_r \sub \cantor$ containing $Z$ and  of measure at most $\tp{-r} /\delta$. This will  show that $Z$ is not ML-random.  
      
      We fix $r$ and suppress it from the notation for now. We may assume that $p_k \in \Malg k$ for each $k$.  Define $S^k_{p_k, \delta}$ as in~(\ref{eqn:S}).  We let $V^r= V = \bigcup_k \+ S_k$, where $\+ S_k =    \Opcl{S^k_{p_k, \delta}}$ (here   $\Opcl X $ denotes the open set given by a set $X$ of strings).

Clearly  $Z \in V$.  It remains to verify that  the  uniform measure of $V$ is at most $\tp{-r} /\delta$. By Claim~\ref{claim:boundS},  it suffices to show that $\+ S_k \sub \+ S_{k+1}$ for each $k$. By Claim~\ref{claim:next k}    viewing $p_k \in \Malg{k+1}$,  we can evaluate (\ref{eqn:S}) for $k+1$ and obtain a set of strings generating  the same clopen set $\+ S_k$. 
Since $p_k \le p_{k+1}$, this set of strings is contained in $S^{k+1}_{p_{k+1}, \delta}$.
    \end{proof}
 
%
%
%
The first author and Stephan have studied the case of states that can be seen as measures on Cantor space in a separate paper \cite{Nies.Stephan:19}. They call a measure $\rho$   \ML\- absolutely continuous if $\lim_m\rho(G_m)  = 0$ for each ML-test $\seq{G_m}$.  
Tejas Bhojraj, a PhD student of Joseph Miller at UW Madison, has shown that this notion coincides with quantum ML-randomness for measures, generalising the result above.

\subsection{Solovay tests in the quantum setting}
We discuss some quantum analogs of Solovay tests, a    test notion that is  equivalent to ML-tests in the classical setting \cite[Ch.\ 3]{Nies:book}. Quantum Solovay tests will be used in the statement of Theorem~\ref{thm:milleryu}.
\begin{definition}[Quantum Solovay randomness] \label{df:qSL}  \mbox{}
\bi \item A \emph{quantum Solovay  test}  is an effective  sequence $\seq {G_r}\sN r$ of   quantum $\SI 1$ sets such that $\sum_r \tau(G_r) < \infty$.
  \item We say that the test is \emph{strong} if the $G_r$ are given as projections; that is,  from $r$ we can compute $n_r$ and a   matrix of algebraic numbers  in   $\Malg {n_r}$ describing $G_r =  p_r$. 
%
\item For  $\delta \in (0,1)$, we     say that $\rho$ \emph{fails} the quantum  Solovay  test \emph{at order} $\delta$   if   $ \rho(G_r) > \delta$ for infinitely many  $r$; otherwise $\rho$ \emph{passes} the qML test \emph{at order} $\delta$. 
\item We say that $\rho$ is \emph{quantum Solovay-random}  if it passes each quantum Solovay  test  $\seq {G_r}\sN r$ at each positive order, that is,  $\lim_r \rho(G_r)=0$.  \ei
 \end{definition}
 Tejas Bhojraj  has shown that quantum \ML\  randomness implies quantum Solovay randomness; the converse implication is trivial.  He converts a quantum Solovay test into a quantum \ML\ test, so that failing the former at level $\delta$ implies failing the latter at level $O(\delta^2)$.  

\section{Initial segment complexity} \label{s:char}

Our definition of quantum \ML\ randomness   is by analogy with classical ML-randomness, but also based on the intuition that the properties of  quantum \ML\ random  states are hard to predict. So we expect that  the complexity of their initial segments is high. In order to formalize this, we start off from a theorem of G\'acs and also Miller and Yu~\cite{Miller.Yu:08} that asserts  that a sequence $Z$ is ML-random iff all its initial segments are hard to compress,  in the sense of plain descriptive string complexity. Our main result works towards an extension of this theorem to the quantum setting.

\subsection*{Classical setting.} Let $K(x)$ denote the prefix-free version of descriptive  complexity of a bit string $x$. (See \cite[Ch.\ 2]{Nies:book} for a brief  overview  of descriptive string complexity, also called Kolmogorov complexity.) The Levin-Schnorr theorem (see \cite[Thm.\ 5.2.3]{Downey.Hirschfeldt:book} or  \cite[Thm.\ 3.2.9]{Nies:book}) says that a bit sequence   $Z $ is ML-random if and only if  each of its initial segments is incompressible in the sense that $\ex b \in \NN  \,  \fa n \, K(Z\uhr n) > n-b$. 
The Miller-Yu Theorem \cite[Thm.\ 7.1]{Miller.Yu:08}  is a version of this in terms of  plain, rather than prefix-free,  descriptive string complexity, usually denoted by  $C(x)$. The constant $b$ is replaced by a sufficiently   fast growing computable function $f(n)$. We  will provide a quantum   analog of the Miller-Yu theorem, thereby avoiding the obstacles to  introducing prefix-free descriptive string complexity in the quantum setting. 

 A slight variant of the Miller-Yu Theorem was obtained by Bienvenu, Merkle and Shen~\cite{Bienvenu.Merkle.ea:08}. Their version states that, for an appropriate computable function $f$ such that $\sum \tp{-f(n)} < \infty$, $Z$ is ML-random iff there is $r$ such that for each $n$ we have $C(Z\uhr n \mid n) \ge n- f(n)-r$, where $C(x\mid n)$ is the plain Kolmogorov complexity of a string $x$ given its length $n$.   Requiring $\sum_n \tp{-f(n)}  < \infty$ of course means that $f$     grows  sufficiently fast; the borderline is between $\log_2 n$ and $2 \log_2 n$.

\subsection*{Quantum setting.} Quantum Kolmogorov complexity  is measured via quantum Turing machines~\cite{Bernstein.Vazirani:97,Vazirani:02}. In the version due to  Berthiaume, van Dam and Laplante \cite[Def.\ 7]{Berthiaume.ea:00},   the compression of a state of $\Malg n$ is via a  state of $\Malg k$, and only approximative in the sense that a state  in $\Malg n$ ``nearby'' the given state   can actually be compressed. More detail on this was provided in Markus M\"uller's thesis~\cite{Muller:07}, which in particular contains a detailed   discussion of how to define halting for a quantum Turing machine. 
   
In order to avoid obscuring the arguments below by discussions of  quantum Turing machines and universality, we will use a restricted machine model that is sufficient for a meaningful analog of the G\'acs-Miller-Yu Theorem. This machine model corresponds to uniformly generated circuit sequences.  After proving our result,  in Remark~\ref{rem: Schwachsinn}  we will discuss its relationship with quantum Kolmogorov complexity in the sense of~\cite{Berthiaume.ea:00}.

 \begin{convention} \label{conv} In the following all qubit sequences and all states will be elementary. Thus,   the relevant matrices only have   entries from the field  $\Calg$  of algebraic complex numbers. \end{convention} 

   \begin{definition}  \label{def:unitary machine} A \emph{unitary machine} $L$ is given by computable sequence of  unitary (elementary) matrices  $ L_n \in \Malg n$. For an input  $z $ which is a density matrix in $\Malg n$,     its     output is  $L(z ; n) := L_n z L_n^\dagger$.  Thus,   if $z $ is a pure state~$\ket {\psi}$, with the usual identifications the  output is $L_n \ket \psi$.    \end{definition}
  Recall that  the trace  norm  of an  $n\times n$    matrix $A$ over $\C$ is defined by $\norm A_{\mathsf tr}  = \sqrt{ \tr (A A^\dagger)}$. The  trace distance between two $n\times n$  matrices     is $D(A, B) = \frac 1 2 \norm{A - B}_{tr}$.
\begin{definition} \label{def:unitary QC} Let $L$ be a unitary machine.
  The $L$-quantum Kolmogorov complexity $QC_L^\eps(x \mid n)$ of a (possibly mixed)   state $x$ on $n$-qubits is the least natural number  $k$ such that there exists a  (mixed)     state $y \in \Malg k$ with \bc $D(x, L( y \otimes \ketbradouble {0^{n-k}}; n)) < \epsilon$. \ec That is,  the output of $L$  on $y \otimes \ketbradouble {0^{n-k}}$ approximates  $x$ to an accuracy of $ \eps$  in  the trace distance.    \end{definition}

We fix a computable listing $\seq{\sss_i}\sN i$ of the elementary pure qubit strings so that $\ell(\sss_i) \le i$ for each~$i$.

 We now   prove a weak quantum analog  of the G\'acs-Miller-Yu theorem. %
%
%
 

%
%

\begin{theorem}\label{thm:milleryu} {\rm 
  Let $\rho$ be a state on $\Malg \infty$. 
  \begin{enumerate} 
    \item   Let $L$ be a unitary machine. Let $1 > \eps >0$ and suppose $\rho$ passes each qML-test at order $1-\eps$. Then for each computable function $f$ satisfying $\sum_n \tp{-f(n)}  < \infty$,  for almost every $n$
       \begin{align*}
         QC_L^{\eps}(\rho\uhr n \mid n ) \geq n- f(n). 
       \end{align*}
     
    \item For each strong  quantum Solovay test $\seq{p_r}\sN r$, there exists a total computable function $f \colon\,  \NN \to \NN$ with  $\sum_n \tp{-f(n)}  \le  4$  and a unitary machine $L$ such that the  following holds. If  $\rho$ fails $\seq {p_r}$   at order $1-\eps$ where   $1 > \eps >0$, then   there are infinitely many   $n$ such that          \begin{align*} 
        QC_L^{\sqrt \eps} (\rho\uhr n \mid n ) < n- f(n).
      \end{align*}

  \end{enumerate}
  }
\end{theorem}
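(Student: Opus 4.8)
\textbf{Proof plan for Theorem~\ref{thm:milleryu}.}

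\emph{Part (1).} The plan is to argue by contradiction: suppose $QC_L^{\eps}(\rho\uhr n \mid n) < n - f(n)$ for infinitely many $n$, and build a qML-test that $\rho$ fails at order $1-\eps$. For a fixed $n$ and a witness $y \in \Malg k$ with $k < n - f(n)$ and $D(\rho\uhr n, L(y \otimes \ketbradouble{0^{n-k}}; n)) < \eps$, the key observation is that the output states $L(y \otimes \ketbradouble{0^{n-k}}; n)$, as $y$ ranges over all (mixed) states of $\Malg k$, all lie in the image under the unitary $L_n$ of the subspace $\C^{2^k} \otimes \ketbradouble{0^{n-k}}$, which has dimension $2^k$. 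So the set of $\rho\uhr n$ compressible to length $< n-f(n)$ is within trace distance $\eps$ of a subspace of dimension $\le 2^{n-f(n)-1}$. First I would define, for each $m$, the projection $q_{n}^{(m)}$ onto a suitable $\eps$-fattening of this low-dimensional subspace; more precisely, I want the projection onto $\{v : \exists w \in \range(L_n(\C^{2^{k}}\otimes\ketbradouble{0^{n-k}})),\ \|v-w\|$ small$\}$ for the relevant $k = n-f(n)-1$. The crucial quantitative point is that such a projection still has $\tau$-measure controlled by roughly $2^{-f(n)}$ times a constant depending on $\eps$ — because trace distance $< \eps$ between density matrices forces the ``mass'' of $\rho\uhr n$ on the complementary projection to be small, so $\rho\uhr n$ has value $> 1-\eps$ (or some fixed function of $\eps$) on the fattened projection. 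Collecting these projections over all $n$ with $f(n) \ge m+c$ and taking $\vee$'s, and using $\sum_n 2^{-f(n)} < \infty$ to get the tail sums below $2^{-m}$, yields a qML-test $\seq{G_m}$ with $\rho(G_m) > 1-\eps$ for all $m$ (since infinitely many bad $n$ survive into every level), contradicting that $\rho$ passes at order $1-\eps$. The projections are special because $L_n$ is elementary, $f$ is computable, and eigenspaces of elementary operators are spanned by elementary vectors.

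\emph{Part (2).} Here I would go in the forward direction: given a strong quantum Solovay test $\seq{p_r}$ with $p_r \in \Malg{n_r}$ and $\sum_r \tau(p_r) < \infty$, I want to design $f$ and a unitary machine $L$ so that whenever $\rho(p_r) > 1-\eps$ for infinitely many $r$, the state $\rho\uhr{n_r}$ is $\sqrt\eps$-approximable from $\dim(\range p_r)$-many qubits. The idea is: if $\rho(p_r) = \tr(\rho\uhr{n_r} p_r) > 1-\eps$, then by the standard gentle-measurement / Winter-type inequality, the normalized compression $p_r \rho\uhr{n_r} p_r / \tr(p_r\rho\uhr{n_r} p_r)$ is within trace distance $\sqrt\eps$ (or a comparable bound like $2\sqrt\eps$ — the constant can be absorbed, but one should check the exact form) of $\rho\uhr{n_r}$; this state is supported on $\range p_r$, a subspace of dimension $d_r := \dim\range p_r$, so it is the image under some unitary of a state on $\lceil \log_2 d_r \rceil$ qubits tensored with $\ket{0\cdots 0}$. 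Since $\tau(p_r) = 2^{-n_r} d_r$, we have $\log_2 d_r = n_r - \log_2(1/\tau(p_r))$, i.e.\ the compressed length is $n_r - g(r)$ where $g(r) = \log_2(1/\tau(p_r))$, and $\sum_r 2^{-g(r)} = \sum_r \tau(p_r) < \infty$. So I define $f$ by $f(n_r) = \lfloor g(r)\rfloor$ (flooring, with a small correction so $\sum 2^{-f(n)} \le 4$ — the bound $4$ rather than the naive $\sum\tau(p_r)$ accounts for flooring losing up to a factor $2$ and for possibly several $r$'s sharing the same $n_r$, which one handles by taking the best one), and $f(n) = $ something huge ($2n$, say) for $n$ not of the form $n_r$. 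The unitary machine $L$ on input length $k$ and total length $n$: if $n = n_r$ for some $r$ and $k = n - f(n)$, it applies the unitary rotating $\C^{2^k}\otimes\ket{0^{n-k}}$ onto $\range p_r$ (computable since $p_r$ is elementary, given by a computable listing, and one can compute an elementary orthonormal basis of its range via Rabin's computable presentation of $\Calg$); otherwise it acts as the identity. One must take care that $f$ and $L$ are well-defined and computable even when the map $r \mapsto n_r$ is not injective — resolve collisions by picking, among all $r$ with $n_r = n$, the one minimizing $f$, which requires a uniform search but is fine since $g(r)$ is computable from the test.

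\emph{Main obstacle.} I expect the real work to be in \textbf{Part (1)}, specifically in getting the measure estimate for the fattened projection right. The naive move — replace ``near a $2^k$-dimensional subspace'' by ``in a $2^k$-dimensional subspace'' — fails because $\rho\uhr n$ need not lie exactly in any low-dimensional space; one genuinely needs an $\eps$-neighborhood in trace distance, and one must show that such a neighborhood is still captured by a projection whose $\tau$-measure is $O_\eps(2^{-f(n)})$ and, moreover, that $\rho$ takes value bounded below (uniformly, by a positive function of $\eps$) on it. The danger is that an $\eps$-fattening in trace distance of a $2^k$-dimensional space of \emph{density matrices} could, if handled carelessly, blow up the measure uncontrollably. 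The fix is to work at the level of the purifying Hilbert space: compressibility to $k$ qubits with accuracy $\eps$ in trace distance means $\rho\uhr n$ is $\eps$-close to a state whose \emph{support} projection has rank $\le 2^k$, and then use that $\tr(\rho\uhr n \cdot (I - P)) \le 2\eps$ (or $\eps$, depending on normalization) for the rank-$2^k$ support projection $P$ of the nearby state — so $\rho\uhr n(P) \ge 1 - 2\eps$ with $P$ itself low-rank, no fattening needed at all on the measure side. One then takes $G_m = \bigvee\{ P^{(n)} : f(n) \ge m + c\}$ where $P^{(n)}$ ranges over (a computable enumeration of) rank-$2^{n-f(n)}$ elementary projections that some machine-input pair could produce; the subtlety is that there are continuum-many inputs $y$, but only finitely many rank-$\le 2^{n-f(n)}$ elementary projections are relevant per $n$ if we instead take $P^{(n)} = L_n(\C^{2^{n-f(n)}}\otimes\ketbradouble{0^{\ast}})$ directly — a \emph{single} projection per $n$, of rank exactly $2^{n-f(n)}$ and hence $\tau$-measure $2^{-f(n)}$ — and observe that any $\eps$-compressible $\rho\uhr n$ has $\rho\uhr n(P^{(n)}) \ge 1 - 2\eps$ by the trace-distance bound, so passing $G_m$ at order $1-2\eps$ gives the result (one then just renames $\eps$). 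This reduces the whole measure bookkeeping to the convergent tail of $\sum 2^{-f(n)}$, which is exactly the hypothesis.
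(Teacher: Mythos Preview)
Your Part~(2) plan is essentially the paper's proof: define $f(n_r)\approx-\log_2\tau(p_r)$, let $L_{n_r}$ be a unitary carrying $\+H_{n_r-f(n_r)}\otimes\ket{0\cdots0}$ onto $\range(p_r)$, and use the gentle-measurement bound $D\bigl(p_r\,\rho\uhr{n_r}\,p_r/\tr(\rho\uhr{n_r}p_r),\,\rho\uhr{n_r}\bigr)\le\sqrt{1-\tr(\rho\uhr{n_r}p_r)}<\sqrt\eps$, which the paper derives via Uhlmann's theorem and the fidelity--trace-distance inequality. The paper avoids your collision-handling by simply assuming $r\mapsto n_r$ strictly increasing.

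Your Part~(1) has the right core observation --- every output $L(y\otimes\ketbradouble{0^{n-k}};n)$ with $k\le K$ is supported on the single rank-$\tp K$ projection $P=L_n(\+H_K\otimes\ket{0^{n-K}})$, and $D(\rho\uhr n,x)<\eps$ with $\tr(xP)=1$ forces $\rho(P)>1-\eps$ --- and this is exactly what the paper uses. The gap is in your test construction. Taking $G_m=\bigvee\{P^{(n)}:f(n)\ge m+c\}$ with $\tau(P^{(n)})=\tp{-f(n)}$ does \emph{not} give $\tau(G_m)\le\tp{-m}$: for $f(n)=\lceil 2\log_2 n\rceil$ one has $\sum_n\tp{-f(n)}<\infty$, yet $\sum_{f(n)\ge m+c}\tp{-f(n)}\asymp\sum_{n\ge\tp{(m+c)/2}}n^{-2}\asymp\tp{-(m+c)/2}$, which is not $O(\tp{-m})$. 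Switching to a tail in $n$ (i.e.\ $n\ge N_m$) does not rescue this, since such an $N_m$ is not computable from $m$ without an effective bound on the total sum.

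The paper's fix is to put the slack in the \emph{rank} rather than in the index set: for the $r$-th component it uses descriptions of length $\le n-f(n)-r$, so the projection at level $n$ has $\tau$-measure $\le\tp{-f(n)-r+O(1)}$, whence $\tau(G_r)\le\tp{-r+O(1)}\sum_n\tp{-f(n)}\le\tp{-r}$ once $f$ is normalised so that the series sums to at most $1/4$. The contrapositive then reads: if for every $r$ some $n$ has $QC_L^\eps(\rho\uhr n\mid n)<n-f(n)-r$, then $\rho$ fails at order $1-\eps$; since $f(n)\to\infty$ this upgrades to the ``almost every $n$'' statement. This rank-shrinking device is the missing ingredient in your plan.
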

Note that this is not a full analog,  because the second part can only be obtained under the hypothesis that $\rho$  fails a strong Solovay test (Def.\ \ref{df:qSL}).
  Bhojraj has announced an alternative version of Part 1 where  the hypothesis is that   $\rho$  pass all strong Solovay tests at order $1 - \eps$. 


%
%
%
%

\begin{proof} 
\emph{Part~1:}
We may assume that $\sum_n \tp{-f(n)}\le 1/4$, because we can replace $f$ by $\widetilde f = f+C$ where $C$ is sufficiently large so that this condition is met. 

Recall  our fixed listing $\seq{\sss_i }\sN i$ of the pure elementary quantum states of any length.
 For a given parameter $r \in \NN$,  and   $t,n\in \NN$, let $S_{r,t}(n)$ be the set of  pure qubit strings $x= \sss_i$, $i \le t$, of length   $n$ so that 
for some pure qubit string $y = \sss_k$, $k \le t$, we have  \bc   $|y| \le n- f(n)-r$ and $L(y\otimes \ket{0^{n-|y|}}; n)=x$. \ec

Note that $S_{r,t}(n)$ is computable in $r,t,n$ using Convention~\ref{conv}. Hence from $r,t,n$  we can compute   an orthogonal projection $p_{r,t}(n)$  in $\Malg n$ onto the subspace generated by $S_{r,t}(n)$. Let $p_{r,t} = \sup_{n \le t }p_{r,t}(n)$. Then $p_{r,t} \in \Malg t$ and $p_{r,t}$ is computable in $r,t$. Clearly $p_{r,t} \le p_{r, t+1}$ for each $t$.

   By definition of unitary machines  the dimension of the range of $p_{r,t}(n)$ is bounded by $\tp{n- f(n) -r+2}$. Hence $\tau(p_{r,t}(n)) \le \tp{-f(n)-r+2}$, and then $\tau(p_{r,t}) \le \sum_n \tp{-f(n)-r+2} \le \tp{-r}$ by our hypothesis on $f$.

Let $G_r$ be the quantum $\SI 1$ set given by the sequence $\seq {p_{r,t} }\sN t$. Then $\seq{G_{r}}\sN r$ is a quantum ML-test.

We    show that there is $r$ such that for each $n$ we have $QC_L^{\eps}(\rho\uhr n \mid n ) \geq n- f(n)-r 
$. Since we can carry out the same argument with $\lfloor \frac 1 2 f \rfloor$ instead of $f$,  and $f(n) \to \infty$, this will be sufficient. 

We proceed by contraposition. Suppose that for  arbitrary $r\in \NN$  there is $n$ such that   $QC_L^\eps(\rho \uhr n \mid n) < n- f(n) -r$. This means that there is a state  $y$ (possibly   mixed)  of length $k < n- f(n) -r$  such that  $D(x, \rho\uhr n)< \eps$ where $x= L(y \otimes \ketbradouble{0^{n-k}},n)$.  Let  $y = \sum \alpha_i  \ketbra {y_i}{y_i}$ be  the corresponding  convex combination of pure states with $\alpha_i$ algebraic and $y_i$ of length  $k$. We have  $x= \sum_i \alpha_i \ketbra {x_i}{x_i}$ where $x_i= L(y_i \otimes \ket{0^{n-k}},n)$.   Then there is $t$  such that $x_i\in S_{r,t}(n)$ for each $i$, and hence $\tr {[x p_{r,t}}(n)]=1$. This implies that  $\rho(p_{r,t}(n)) > 1- \eps$ and hence $\rho(G_r) > 1-\eps$. Since $r$ was arbitrary this shows that  $\rho$ fails the test at order $1-\eps$.

 \medskip

\n
  \emph{Part 2:}   Let  $\seq {p_r}\sN r$ be a strong quantum Solovay test. We may assume that $\sum_r \tau(p_r) \le 1/2$, and that $p_r \in  \Malg {n_r}$ where $n_r$ is computed from $r$  and  $n_r<n_{r+1}$ for each $r$. 
  The idea is as follows:   suppose the range of $p_r$ has   dimension $k< n_r$. Then  $z_n'$, the   projection of $\rho\uhr n$ to $p_r$ as defined below,  can be directly described by a density matrix in   $\Malg k$ if we define  our unitary machine $L$ to  compute an isometry between $\+ H_k$ and the range of $p_r$.  If  $\rho  (p_r) > 1-\eps$ we show that  the trace distance from $z_n'$ to $\rho\uhr n$ is at most~$\sqrt \epsilon$. Therefore  $QC_L^{\sqrt \eps} (\rho\uhr n \mid n ) \le k$.  For a function $f$ as required, we can ensure $k  < n - f(n_r)$ for each $r$.

 For the details, let  $f \colon \NN \to \NN $ be a computable  function such that 
       \begin{align*}
    2^{-f(n_r)}  \ge   \tau (p_r) >  2^{-f(n_r)-1} 
  \end{align*}
   and  $f(m) = m$ if  $m$ is not of the form $n_r$. Note that $f$ is computable and satisfies $\sum_n \tp{-f(n)} \le 4$. 
  Let $g(n) = n- f(n)$.
  
  To  describe the unitary machine $L$ we need to provide  a computable sequence of unitary matrices $\seq{L_n}\sN n$.  For $n = n_r$, let $L_n$ be a unitary matrix  in $\Malg n$ such that its restriction $L_n \upharpoonright {\+ H_{g(n)}}$ is an isometry $\+ H_{g(n)} \cong \range  (p_r) $ (the range of $p_r$). By  hypothesis on the sequence $\seq {p_r}$ this sequence of unitary matrices is computable. 

 For a projection operator  $p$ in $\Malg n$ and    a density matrix $s$  in $\Malg n$, we define the projection of $s$ by $p$   to be
\begin{align*} \Proj s p   = \frac{1}{\tr[sp]} psp \end{align*}
Note that this is again a density matrix, and each of its eigenvectors is  in the range of $p$.  

In the following fix an $r$ such that   $\rho  (p_r) > 1-\eps$.  
Write  $n= n_r$ and  $z_n = \rho \uhr  n\in \Malg n$. So $\tr( z_n  p_r) > 1-\eps$.    Let 
%
\bc $z'_n= \Proj  {z_n} {p_{r}} $  \ec
%
 
 %

\begin{claim}  \label{cl:ott} $QC_L^\delta (z'_n \mid n) \le g(n)$ for each $\delta >0$.
\end{claim}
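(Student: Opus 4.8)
The plan is to show directly that $z'_n$, being supported on $\range(p_r)$ which has dimension at most $\tp{g(n)}$, can be produced by the unitary machine $L$ from a density matrix living in $\Malg{g(n)}$, with \emph{zero} error (hence error below any $\delta>0$). First I would recall from the construction that for $n=n_r$ the matrix $L_n$ was chosen so that $L_n\upharpoonright \+ H_{g(n)}$ is an isometry $\+ H_{g(n)}\cong \range(p_r)$; in particular $L_n$ maps the subspace spanned by the basis vectors $\ul{\eta}\otimes \ket{0^{n-g(n)}}$ (for $\eta$ of length $g(n)$) isometrically onto $\range(p_r)$. Equivalently, the restriction of the unitary conjugation $s\mapsto L_n s L_n^\dagger$ to density matrices of the form $y\otimes\ketbradouble{0^{n-g(n)}}$ with $y\in\Malg{g(n)}$ has image exactly the density matrices supported on $\range(p_r)$.

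The key step is then: since $z'_n = \Proj{z_n}{p_r} = \frac{1}{\tr[z_n p_r]}\, p_r z_n p_r$, every eigenvector of $z'_n$ lies in $\range(p_r)$ (as noted right after the definition of $\Proj{\cdot}{\cdot}$), so $z'_n$ is a density matrix supported on $\range(p_r)$. Therefore I would set $k=g(n)$ and define $y = L_n^\dagger\, z'_n\, L_n$ restricted to (i.e.\ regarded as living in) $\Malg k$ via the isometry identification $\+ H_k \cong \range(p_r)$; concretely, $y$ is the unique density matrix in $\Malg k$ such that $L_n(y\otimes \ketbradouble{0^{n-k}})L_n^\dagger = z'_n$. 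Because $y\otimes\ketbradouble{0^{n-k}}$ is a genuine density matrix in $\Malg n$ and $L_n$ is unitary, $y$ is a genuine density matrix in $\Malg k$, and $L(y\otimes\ketbradouble{0^{n-k}};n) = z'_n$ exactly. Hence $D(z'_n, L(y\otimes\ketbradouble{0^{n-k}};n)) = 0 < \delta$ for every $\delta>0$, which gives $QC_L^\delta(z'_n\mid n)\le k = g(n)$.

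The one point requiring a little care — and the place I expect to spend most of the writing — is checking that the "witness" $y$ produced above is an \emph{admissible} input in the sense of Definition~\ref{def:unitary QC}, i.e.\ that it is a legitimate (mixed) state of $\Malg k$; this is just the observation that unitary conjugation preserves positivity and trace, combined with the fact that $L_n$ restricted to $\+ H_k\otimes\ket{0^{n-k}}$ is the chosen isometry onto $\range(p_r)\supseteq \operatorname{supp}(z'_n)$, so $L_n^\dagger z'_n L_n$ is supported on $\+ H_k\otimes\ket{0^{n-k}}$ and therefore has the required tensor form $y\otimes\ketbradouble{0^{n-k}}$. No approximation argument is needed here at all, since the compression is exact; the $\sqrt\eps$ error in the statement of Theorem~\ref{thm:milleryu}(2) will only enter afterwards, when one compares $z'_n$ with the actual initial segment $\rho\uhr n = z_n$. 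I would close by remarking that the claim holds for every $\delta>0$ precisely because the error is $0$, which is exactly what is needed to later conclude $QC_L^{\sqrt\eps}(\rho\uhr n\mid n)\le g(n) = n - f(n)$.
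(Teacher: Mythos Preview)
Your proposal is correct and follows essentially the same approach as the paper's proof: both observe that $z'_n$ is supported on $\range(p_r)$ and use the construction of $L_n$ (whose restriction to $\+ H_{g(n)}\otimes\ket{0^{f(n)}}$ maps onto $\range(p_r)$) to pull $z'_n$ back to a density matrix $y\in\Malg{g(n)}$ with $L_n(y\otimes\ketbradouble{0^{f(n)}})L_n^\dagger=z'_n$ exactly. Your write-up is simply a more detailed unpacking of the paper's two-line argument, including the explicit verification that $L_n^\dagger z'_n L_n$ has the required tensor form and that the error is zero.
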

\begin{proof} Each eigenvector of $z_n'$ is in the range of $p_r$. So there is a density matrix $y \in \Malg {g(n)}$ such that  $L_n  (y \otimes \ketbradouble{0^{\otimes f(n)}}) L_n^\dagger= z_n'$.
\end{proof} 

   We now argue that      $D(z'_n,z_n) < \sqrt \eps$ (recall that $D$ denotes  the trace distance).  We rely on the following.
  
  \begin{proposition} \label{fa:fid} Let $p$ be a projection in $\Malg n$, and let $\theta$ be a density matrix  in $\Malg n$.  Write $\alpha = \tr { [ \theta p]}$. Let $\theta ' = \Proj \theta p$. Then  $D (\theta', \theta) \le \sqrt {1- \alpha}$.  \end{proposition}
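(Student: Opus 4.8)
The plan is to bound the trace distance $D(\theta',\theta) = \tfrac12\|\theta'-\theta\|_{\mathsf{tr}}$ by relating it to the fidelity between $\theta$ and $\theta'$ and then estimating that fidelity directly, using the fact that $\theta' = p\theta p/\alpha$ is the renormalized compression of $\theta$ to the range of $p$. First I would recall the standard Fuchs--van de Graaf inequality $D(\theta',\theta) \le \sqrt{1 - F(\theta',\theta)^2}$, where $F(\theta',\theta) = \|\sqrt{\theta'}\sqrt{\theta}\|_{\mathsf{tr}}$ is the fidelity (here I use the ``$F$ is an inner-product-like quantity'' convention so that $F=1$ for equal states). So it suffices to show $F(\theta',\theta)^2 \ge \alpha$, i.e. $F(\theta',\theta) \ge \sqrt\alpha$.

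To estimate the fidelity, I would use a purification argument, or more simply Uhlmann's characterization together with a convenient choice of one purification. Write $\theta = \sum_i \lambda_i \ketbra{e_i}{e_i}$ in spectral form, and let $\ket\Psi = \sum_i \sqrt{\lambda_i}\,\ket{e_i}\otimes\ket{e_i}$ be the canonical purification of $\theta$ on $\+ H_n \otimes \+ H_n$. Applying $p\otimes I$ to $\ket\Psi$ and normalizing gives a purification of $\theta'$ (one checks the reduced density matrix of $(p\otimes I)\ket\Psi$ on the first factor is exactly $p\theta p$, so after normalizing by $1/\sqrt\alpha$ it is $\theta'$). Then by Uhlmann's theorem $F(\theta',\theta)$ is at least the overlap of these two particular purifications, which is
\[
 \left| \frac{1}{\sqrt\alpha}\,\langle \Psi | (p\otimes I) | \Psi\rangle \right| = \frac{1}{\sqrt\alpha}\,\tr[\theta p] = \frac{\alpha}{\sqrt\alpha} = \sqrt\alpha .
\]
Hence $F(\theta',\theta) \ge \sqrt\alpha$, and combined with Fuchs--van de Graaf this yields $D(\theta',\theta) \le \sqrt{1-\alpha}$.

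I expect the main obstacle to be purely expository: making sure the conventions for fidelity and the direction of the Fuchs--van de Graaf inequality are stated consistently (the literature has $F$ and $F^2$ conventions that differ by a square), and verifying cleanly that $(p\otimes I)\ket\Psi$, once normalized, really purifies $\theta' = p\theta p/\alpha$ rather than something off by a factor. An alternative, more self-contained route that avoids Uhlmann entirely: diagonalize $\theta$, expand $\|\theta'-\theta\|_{\mathsf{tr}}$, and use the operator inequality $0 \le p\theta p \le \theta$ together with $\tr[\theta - p\theta p] = 1-\alpha$ to bound the trace norm; but this gets combinatorially messier, so I would present the fidelity argument as the main line and perhaps remark that an elementary computation also works. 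Either way the content is a short, standard quantum-information estimate, so the proof should be only a few lines.
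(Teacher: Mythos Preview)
Your main argument is correct and is essentially identical to the paper's proof: take a purification $\ket{\psi_\theta}$ of $\theta$, observe that $\alpha^{-1/2}(p\otimes I)\ket{\psi_\theta}$ purifies $\theta'$, apply Uhlmann to get $F(\theta',\theta)\ge \sqrt{\alpha}$, and finish with the Fuchs--van de Graaf bound $D\le \sqrt{1-F^2}$ (the paper cites these as \cite[Thm.~9.4 and Eqn.~9.110]{Nielsen.Chuang:02}). One small caveat about your aside: the inequality $p\theta p \le \theta$ you invoke for the ``alternative, more self-contained route'' is false in general (e.g.\ $\theta=\ketbra{0}{0}$, $p=\ketbra{+}{+}$), so that alternative would need a different estimate; but this does not affect your main line.
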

  \begin{proof}  Let $\ket{\psi_\theta}$ be a  purification of $\theta$.  Then $\alpha^{-\half} p  \ket{\psi_\theta}$ is a purification of~$\theta'$.    Uhlmann's theorem (e.g.\ \cite[Thm.\ 9.4]{Nielsen.Chuang:02}) implies  \bc $F(\theta',\theta) \geq \alpha^{-\half} \Scp{\psi_\theta}{p \mid \psi_\theta} = \alpha^\half$,  \ec where $F(\sss, \tau)  =\tr {\sqrt {\sqrt \sss \tau \sqrt \sss}}$ denotes fidelity.  
  Now it suffices to recall  from e.g.\ \cite[Eqn.\ 9.110]{Nielsen.Chuang:02}  that $D(\theta', \theta) \le \sqrt {1- F(\theta', \theta)^2}$. 
  \end{proof} 
 We apply  Prop.~\ref{fa:fid} to   $p = p_{r}$  and $\theta = z_n$ and $\theta' = z'_n$, where as above $n= n_r$.  By hypothesis $\alpha = \tr {[z_n p_{r} ]} =  \rho  (p_{r}) > 1-\eps$ and hence $\sqrt {1- \alpha} < \sqrt \eps$. Claim~\ref{cl:ott} now  shows $QC_L^\epsilon (z_n) \le g(n)$.
 %
 Since there are infinitely many $r$ such that    $\rho  (p_r) > 1-\eps$, we obtain  Part 2 of Thm.\ \ref{thm:milleryu}.
\end{proof}

\begin{remark}\label{rem: Schwachsinn} 
  {\rm In an important extended abstract, Yao~\cite{Yao:93}  proved that the quantum Turing machines  (QTM) of  Bernstein and Vazirani~\cite{Bernstein.Vazirani:97} can be simulated by  quantum circuits with only a polynomial overhead in time. For  recent work  on such a  simulation see  \cite{Molina.Watrous:19}. 

Yao  also announced    the converse direction: a QTM can simulate the input/output behavour of a  computable sequence of  quantum circuits. The argument is briefly discussed after the statement of Theorem 3  in~\cite{Yao:93} (also see \cite{Westergaard:05}).  So with suitable input/output conventions, Definition~\ref{def:unitary QC} can be seen as  a special case of the definition of $QC^\eps_M$ for a quantum Turing machine $M$ as in~\cite[Def.\ 7]{Berthiaume.ea:00}.

We ignore at present whether Part 1 can be strengthened to general quantum Turing machines. The input/output behaviour of such a machine is merely given by a quantum operation, for instance because at the end of a computation the state has to be discarded. }
   
   \end{remark}

On the other hand, since a QTM $M$ can simulate the effect of the sequence  of quantum circuits $\seq {L_n}$, and a universal QTM in the sense of~\cite{Bernstein.Vazirani:97} can simulate $M$ with a small loss in accuracy, we obtain the following.

\begin{cor} In the setting of Part 2 of the theorem, we have $QC^{2\sqrt \eps} (\rho\uhr n \mid n ) < n- f(n)$ for infinitely many $n$. \end{cor}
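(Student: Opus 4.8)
The plan is to combine Part~2 of Theorem~\ref{thm:milleryu} with the two simulation facts recalled in Remark~\ref{rem: Schwachsinn}: Yao's result~\cite{Yao:93} that a single quantum Turing machine can reproduce the input/output behaviour of a uniformly generated sequence of quantum circuits, and the Bernstein--Vazirani theorem~\cite{Bernstein.Vazirani:97} that there is a universal QTM $\mathbb{U}$ which, given a classical description of a QTM $M$ together with a positive rational $\delta$, simulates $M$ up to trace-norm error $\delta$ in the output.

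Concretely, fix the unitary machine $L=\seq{L_n}$ and the computable function $f$ supplied by Part~2, so that failure of $\rho$ on the given strong quantum Solovay test at order $1-\eps$ produces infinitely many $n$ together with a witness $y\in\Malg k$, $k<n-f(n)$, satisfying $D(\rho\uhr n,\, L(y\otimes\ketbradouble{0^{n-k}};n))<\sqrt\eps$. First I would apply Yao's construction to the circuit family $\seq{L_n}$, obtaining a QTM $M$ that simulates it with error below any prescribed threshold; since $\eps$ is a fixed positive parameter we may demand this error to be below $\sqrt\eps/2$. Then I would run $\mathbb{U}$ on the program obtained by concatenating a fixed description of $M$, the fixed rational $\delta=\sqrt\eps/2$, and the string $y$; this program has length $k+c$, where the constant $c$ depends only on $M$ (hence on the test) and on $\eps$, not on $n$. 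Applying the triangle inequality for the trace distance across the three stages $\rho\uhr n$, $L(y\otimes\cdots)$, $M(y\otimes\cdots)$, $\mathbb{U}(y\otimes\cdots)$, the output of $\mathbb{U}$ on this program lies within $\sqrt\eps+\sqrt\eps/2+\sqrt\eps/2=2\sqrt\eps$ of $\rho\uhr n$, and strictly so since the first term is already $<\sqrt\eps$. Hence $QC^{2\sqrt\eps}(\rho\uhr n\mid n)\le k+c<n-f(n)+c$ for infinitely many $n$. Finally I would absorb the additive constant $c$ into $f$: this is harmless, since the only role of $f$ is the convergence of $\sum_n\tp{-f(n)}$, which persists after replacing $f$ by $f-c$ (at the cost of a larger, still finite, bound on that sum), yielding the stated inequality $QC^{2\sqrt\eps}(\rho\uhr n\mid n)<n-f(n)$ for infinitely many $n$.

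The only delicate point I expect is lining up the input/output conventions so that the machine obtained by composing Yao's circuit-to-QTM simulation with the Bernstein--Vazirani universal QTM is one for which the quantity $QC^{\eps}$ in the sense of~\cite[Def.~7]{Berthiaume.ea:00} is bounded by $QC_L^{\eps}$ up to the advertised additive constant and accuracy loss; this is exactly the folklore correspondence flagged in Remark~\ref{rem: Schwachsinn}, which I would invoke rather than reprove. Everything else is the triangle inequality and bookkeeping of additive constants.
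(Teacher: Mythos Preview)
Your proposal is correct and follows exactly the line the paper sketches just before the corollary: simulate the unitary machine $L$ by a QTM via Yao, then simulate that QTM by the Bernstein--Vazirani universal machine with a further small accuracy loss, and conclude by the triangle inequality. You supply more detail than the paper does---in particular the explicit $\sqrt\eps/2+\sqrt\eps/2$ error budget and the absorption of the additive simulation constant into $f$---but the approach is the same.
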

 It would   be interesting to find a version of   Theorem~\ref{thm:milleryu} in terms of G\'acs' version of quantum Kolmogorov complexity, which is based on semi-density matrices rather than machines \cite{Gacs:01}.

\section{Outlook}
  As mentioned,   randomness via algorithmic tests  has   been related to effective dynamical systems in   papers such as~\cite{Nandakumar:08,Hoyrup:12}. 
For a promising connection with quantum information processing, recall that  a spin chain can be seen as  a quantum dynamical system with the shift operation \cite{Bjelakovic.etal:04}, in analogy with the classical case with  the shift oepration on $\cantor$ that deletes the first bit of a sequence. An interesting potential application of our randomness notion is to obtain an effective  quantum version of the   Shannon-McMillan-Breiman  (SMB) theorem from the 1950s (see e.g.\ \cite{Shields:96}). That result is important in the area of data compression because  it determines the asymptotic compression rate of sequences of symbols emitted by an ergodic  source.      

Let $A$ be a finite alphabet, and let $P$ be an ergodic probability measure on $A^\NN$.  Let $h(P)$ denote the entropy of $P$. The classical theorem states that  for almost every $Z\in A^\NN$,  we  have
\begin{equation} \label{eqn: hP} h(P) = - \lim_n \frac 1 n \log P[ Z\uhr n]; \end{equation}   informally, $h(P)$ can be obtained by looking at the asymptotic  ``empirical entropy'' along a  random sequence~$Z$. 
  
   The Shannon-McMillan (SM)  theorem is a slightly  weaker, earlier  version of the full SMB theorem based on the notion of  convergence in probability.   Breiman then  phrased the   Shannon-McMillan theorem as a property of almost every sequence in the sense of the given ergodic measure.  Bjelakovic et al.\   \cite{Bjelakovic.etal:04} provided   a quantum version of the SM-theorem. Their setting is the one of  bi-infinite spin chains, or more generally, $d$-dimensional lattices; it   can be easily adapted to the present setting.   %
  
  Algorithmic  versions
of the SMB theorem~\cite{Hochman:09, Hoyrup:12}       show that \ML\ randomness  of $Z$ relative to the computable ergodic  measure is sufficient for (\ref{eqn: hP}) to hold.  The question then is whether in the quantum setting, quantum ML-randomness relative to a computable shift-invariant ergodic state  is sufficient.

 The von Neumann entropy of a density matrix $S$ is defined by  $H(S) = - \tr (S \log S)$. For a state $\psi$ on $\Malg \infty$ we let 
$h(\psi) =  \lim_n \frac  1 n H(\psi \uhr {\Malg n}$. For background and  notions not defined here see \cite{LogicBlog:17}, Section 6.
\begin{conjecture}[with M.\ Tomamichel]  
 Let $\psi$ be an ergodic computable state on $\Malg \infty$. Let $\rho$ be a state that is quantum ML-random   with respect to $\psi$. Then 
$ h(\psi) = -  \lim_n \frac 1 n\tr (\rho \uhr  {\Malg { {n}}}  \log  ( \psi \uhr  {\Malg{ {n}}}))$.  \end{conjecture}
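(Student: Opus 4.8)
The plan is to reduce the stated identity to two one-sided estimates for the normalised cross-entropy
\[ F_n(\rho) \;:=\; -\frac1n\,\tr\bigl(\rho\uhr n\,\log(\psi\uhr n)\bigr), \]
namely $\limsup_n F_n(\rho)\le h(\psi)$ and $\liminf_n F_n(\rho)\ge h(\psi)$; note that $F_n(\psi)=\frac1n H(\psi\uhr n)\to h(\psi)$ already holds, by subadditivity of $n\mapsto H(\psi\uhr n)$ for the shift-invariant state $\psi$. Since $\psi$ is computable, the machinery of Section~\ref{s:qMLrd} relativises to it: one has quantum $\SI 1$ sets and qML-tests $\seq{G_r}\sN r$ relative to $\psi$ (that is, with $\psi(G_r)\le\tp{-r}$), as well as a universal such test (Proposition~\ref{prop:uML} with $\tau$ replaced by $\psi$). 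For each of the two inequalities I would build, out of the spectral data of $\psi\uhr n$, a qML-test relative to $\psi$ that is failed at some positive order by every state violating that inequality by a fixed $\eps>0$; a state $\rho$ that is quantum ML-random with respect to $\psi$ then satisfies both inequalities, hence the claimed equality.

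Concretely, I would diagonalise $\psi\uhr n=\sum_j\lambda_j^{(n)}\ketbradouble{e_j^{(n)}}$ and introduce the ``surprise'' values $s_j^{(n)}:=-\frac1n\log\lambda_j^{(n)}\ge0$, so that $F_n(\rho)=\sum_j s_j^{(n)}\,\la e_j^{(n)}\,|\,\rho\uhr n\,|\,e_j^{(n)}\ra$ is the average of the $s_j^{(n)}$ against the probability vector $\bigl(\la e_j^{(n)}\,|\,\rho\uhr n\,|\,e_j^{(n)}\ra\bigr)_j$, while $h(\psi)=\lim_n\sum_j s_j^{(n)}\lambda_j^{(n)}$ is the same average against $\bigl(\lambda_j^{(n)}\bigr)_j$. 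Given $\eps>0$, let $q_n^{>}$ (resp.\ $q_n^{<}$) be the spectral projection of $\psi\uhr n$ onto the span of those $e_j^{(n)}$ with $\lambda_j^{(n)}<\tp{-n(h(\psi)+\eps)}$ (resp.\ $\lambda_j^{(n)}>\tp{-n(h(\psi)-\eps)}$). The quantum Shannon-McMillan theorem of Bjelakovic et al.~\cite{Bjelakovic.etal:04} is, in this language, precisely the statement that the spectral measure of $\psi\uhr n$ concentrates near the surprise value $h(\psi)$, so $\psi(q_n^{>})\to0$ and $\psi(q_n^{<})\to0$. A routine splitting of the average $F_n(\rho)$ at the threshold $h(\psi)\pm\eps/2$, using $s_j^{(n)}\ge0$, then yields the following. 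If $\liminf_n F_n(\rho)<h(\psi)-\eps$ then $\rho(q_n^{<})\ge\delta_0>0$ for infinitely many $n$ (when $h(\psi)>0$; the case $h(\psi)=0$ is trivial). Symmetrically, under the regularity hypothesis that $-\frac1n\log\lambda_{\min}(\psi\uhr n)$ stays bounded (true for Gibbs states of local Hamiltonians, the case of primary interest; the general case requires extra care for the contribution of very small eigenvalues to the cross-entropy), if $\limsup_n F_n(\rho)>h(\psi)+\eps$ then $\rho(q_n^{>})\ge\delta_1>0$ for infinitely many $n$. In either case, a state violating the target inequality assigns, infinitely often, a constant positive $\rho$-value to spectral projections of $\psi\uhr n$ whose $\psi$-values tend to $0$.

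The hard part will be organising these projections into a genuine qML-test relative to $\psi$: to realise a quantum $\SI 1$ set $G_r$ with $\psi(G_r)\le\tp{-r}$ as an increasing sequence of special projections built from the $q_n^{>}$ one needs a \emph{computable} rate for $\psi(q_n^{>})\to0$ --- a computable $N(r)$ with $\sum_{n\ge N(r)}\psi(q_n^{>})\le\tp{-r}$ --- whereas \cite{Bjelakovic.etal:04} only proves convergence in probability, with no effective bound. One route past this is to prove an effective quantum ergodic theorem for shift-invariant computable states, extending the law of large numbers of the first author and Tomamichel quoted in Section~\ref{s:qMLrd} from the single observable ``the $i$-th qubit equals $1$'' to arbitrary local observables, and then to reproduce inside it the classical Breiman argument (conditional entropies, martingale convergence, a maximal ergodic inequality), in the spirit of the effective classical proofs of Hoyrup and Hochman~\cite{Hoyrup:12,Hochman:09}. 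This is the conceptually correct route, but it amounts to developing effective quantum ergodic theory essentially from scratch.

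A second route is to relativise Theorem~\ref{thm:milleryu} to $\psi$ in place of the tracial state $\tau$: a $\psi$-relative quantum G\'acs-Miller-Yu theorem would characterise quantum ML-randomness of $\rho$ relative to $\psi$ by incompressibility of the initial segments $\rho\uhr n$ at rate $\approx -\log\psi$, after which the identity would follow from the quantum asymptotic equipartition property by explicit isometric coding of the typical subspace of $\psi\uhr n$, exactly as in Part~2 of the proof of Theorem~\ref{thm:milleryu}. In either approach a small preliminary is to verify that quantum ML-randomness of $\rho$ relative to $\psi$ forces $\rho$ to assign vanishing value to the spectral subspaces of $\psi\uhr n$ carrying extremely small eigenvalues --- an honest $\psi$-relative qML-test does this --- so that $\tr(\rho\uhr n\log(\psi\uhr n))$ is finite and the limit in the statement is well behaved.
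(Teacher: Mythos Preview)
The statement you are attempting to prove is not a theorem in the paper but a \emph{conjecture}; the paper gives no proof. It records only two partial results: the i.i.d.\ case, verified by Tomamichel and the first author in~\cite{LogicBlog:17}, and the case where both $\psi$ and $\rho$ are measures on Cantor space with bounded empirical entropy, proved by Nies and Stephan~\cite{Nies.Stephan:19}. So there is no ``paper's own proof'' to compare against.

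Your outline is candid about this: you correctly identify the natural strategy (spectral projections onto atypical subspaces, the quantum Shannon--McMillan theorem of~\cite{Bjelakovic.etal:04}) and the central obstruction (the absence of a computable rate in the quantum SM theorem, which is needed to assemble a genuine qML-test relative to $\psi$). You also flag the regularity issue for the upper bound when $-\tfrac1n\log\lambda_{\min}(\psi\uhr n)$ is unbounded. These are precisely the reasons the statement remains open. A few further obstacles you do not name: the threshold $h(\psi)$ is not in general computable from a computable $\psi$, so the projections $q_n^{>}, q_n^{<}$ cannot be defined effectively as written; the projections in a q-$\SI 1$ set must be \emph{special} (entries in $\Calg$, Definition~\ref{def:Calg}), which spectral projections of a merely computable $\psi\uhr n$ need not be; and the $q_n^{>}$ do not form an increasing sequence in $n$, so packaging them into a q-$\SI 1$ set requires a supremum construction whose $\psi$-mass you must still control. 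Neither of your two proposed routes --- effective quantum ergodic theory, or a $\psi$-relative quantum G\'acs--Miller--Yu theorem --- is carried out, and each is itself a substantial open problem. What you have written is an accurate diagnosis of why the conjecture is hard, not a proof of it.
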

  Note that this reduces to the classical theorem in case $\psi$ is a probability measure and $\rho$ a bit sequence, because each matrix $\rho \uhr  {M_{ {n}}}  \log ( \psi \uhr  {M_{ {n}}})$  is diagonal with at most one nonzero entry. Tomamichel and the first author have verified the conjecture in case $\mu$ is an i.i.d.\ state; see~\cite[Section 6]{LogicBlog:17}. The first author and Stephan \cite[Thm.\ 23]{Nies.Stephan:19} have proved the conjecture in case that $\psi$ and $\rho$ are measures on Cantor space  and the empirical entropy $-\frac 1 n \log \psi[x]$, for $x$ an $n$-bit string, is bounded above (this means that    $\psi$ is close to the  uniform measure).

%


%
 
\medskip
\n \emph{Acknowledgements.} We thank Tejas Bhojraj and Marco Tomamichel for corrections and helpful discussions,  and the anonymous referee for many helpful suggestions, in particular on Theorem~\ref{thm:milleryu}. We also thank  Willem Fouch\'e and Peter G\'acs   for helpful comments. 

VBS is grateful for the hospitality of the University of Auckland. This work was started while he was still with the Department of Physics at Ghent University, and supported by the EU through the ERC grant Qute. He also acknowledges support by the NCCR QSIT.  AN is grateful for the hospitality of Ghent University and  of the IMS at NUS, Singapore. He  acknowledges support by the Marsden fund of New Zealand. 

%
%
%

 \def\cprime{$'$} \def\cprime{$'$}

\end{document}